\documentclass[10pt, a4paper, UKenglish, twoside]{article}
\usepackage[UKenglish]{babel}
\usepackage[vmargin=40mm,inner=25mm,outer=30mm,headheight=14pt]{geometry}
\usepackage{bbm}
\usepackage[utf8]{inputenc}
\usepackage[T1]{fontenc}
\usepackage{fancyhdr}
\pagestyle{fancy}
\fancyhf{}

\fancyfoot[C]{\thepage}
\fancyhead[LO,RE]{}
\fancyhead[LE,RO]{}
\fancyhead[CO]{}
\fancyhead[CE]{}
\usepackage{amsmath} 
\usepackage[final]{showkeys}
\usepackage{amssymb}

\usepackage{paralist}
\usepackage{hyperref}
\usepackage{amsthm}
\usepackage{color}
\usepackage{mathtools}
\numberwithin{equation}{section}
\usepackage [pdftex] {graphics}
\usepackage{graphicx}
\theoremstyle{plain}
\usepackage[mathscr]{eucal}
\usepackage{bbold}
\newtheorem{lemma}{Lemma}[section]

\newtheorem{theorem}[lemma]{Theorem}
\newtheorem{lemmaanddef}[lemma]{Lemma and Definition}
\newtheorem{proposition}[lemma]{Proposition}
\theoremstyle{definition}

\theoremstyle{remark}
\newtheorem{remark}[lemma]{Remark}
\DeclareMathOperator{\tr}{Tr}
\DeclareMathOperator{\myL}{{\bs{L}}}
\renewcommand{\d}{\mathrm{d}}

\newcommand{\mc}{\mathcal}
\newcommand{\ms}{\mathscr}
\newcommand{\bs}{\boldsymbol}

\newcommand{\bd}{\boldsymbol}
\newcommand{\dx}{\mathrm{d}x}
\newcommand{\dy}{\mathrm{d}y}
\newcommand{\p}{\varphi}
\newcommand{\myM}{{\bs{M}}}
\newcommand{\myK}{{\Xi}}
\newcommand{\R}{{\mathbb{R}}}
\newcommand{\vertiii}[1]{{\left\vert\kern-0.25ex\left\vert\kern-0.25ex\left\vert #1 
    \right\vert\kern-0.25ex\right\vert\kern-0.25ex\right\vert}}

\newcommand{\blue}[1]{{\color{black}{#1}}}

\begin{document}
 \title{Finite Range Decomposition for Gaussian Measures with Improved Regularity}
  \author{Simon Buchholz}
  \date{\today}
  \maketitle
  \begin{abstract}
 We consider a family of gradient Gaussian vector fields on the torus 
 $(\mathbb{Z}/L^N\mathbb{Z})^d$.
Adams, Koteck\'{y},  M\"{u}ller and independently Bauerschmidt established  the existence of a uniform finite range
decomposition of the
corresponding covariance operators, i.e., the covariance can be written as a sum of covariance operators supported on increasing cubes with
diameter $L^k$.
We improve this result and show that the decay behaviour of the kernels in Fourier space  
can be controlled. Then we show the regularity of the integration map that convolves functionals with the partial measures of the finite range
decomposition. 
 In particular the new finite range decomposition avoids the loss of regularity which arises in the renormalisation
 group approach to anisotropic problems in statistical mechanics.
 \end{abstract}
 {\small
{ \bf keywords:} Gradient Gaussian field, finite range decomposition, renormalisation, Fourier multiplier
 }
 
\section{Introduction}\label{secintro}
In this paper we consider finite range decompositions for families of translation invariant Gaussian fields on a torus
${T}_N=(\mathbb{Z}/L^N\mathbb{Z})^d$. 
A Gaussian process $\xi$ indexed by ${T}_N$ has range $M$ if $\mathbb{E}(\xi(x)\xi(y))=0$ for
any $x,y$ such that $|x-y|\geq M$.
A finite range decomposition of $\xi$ is a decomposition $\xi=\sum_k \xi_k$ such that the $\xi_k$ are independent processes
with
range $\sim L^k$.
Equivalently, if $\mc{C}(x,y)$ is the covariance of $\xi$ then a finite range decomposition 	is possible
if there are covariances $\mc{C}_k(x,y)$ such that $\mc{C}=\sum_k \mc{C}_k$, $\mc{C}_k(x,y)=0$ for $|x-y|\gtrsim L^k$, and 
$\mc{C}_k$ is positive semi-definite.

Here we consider vector valued Gaussian fields $\xi_A$ whose covariance is the Greens function of a constant coefficient,
anisotropic, 
elliptic, discrete
difference operator 
$\ms{A}=\nabla^\ast A\nabla$ (plus higher order terms). 
Our main object of interest is the corresponding gradient Gaussian field $\nabla \xi_A$, i.e., we consider the $\sigma$-algebra
generated by the gradients. 
They are referred to as massless field in the language of quantum field theory.
Gradient fields appear naturally in  discrete elasticity where the energy only depends on the distance between the atoms.
The analysis
of gradient Gaussian fields is difficult because they exhibit long 
range correlations only decaying critically 
as $\mathbb{E}(\nabla_i\xi^r(x)\nabla_j\xi^s(y))\propto |x-y|^{-d}$. 
Finite range decompositions of
gradient Gaussian fields are the basis of a multi-scale approach to control
the correlation structure of the fields and avoid logarithmic divergences that appear in naive approaches.

Finite range decompositions of quadratic forms have appeared in different places in mathematics. Hainzl and Seiringer obtained decompositions of
radially symmetric functions into weighted integrals over tent functions \cite{MR1930084}. 
The first decomposition for a setting without radial symmetry was obtained for
the discrete Laplacian by Brydges, Guadagni, and Mitter in \cite{MR2070102}.
Their results are based on averaging the Poisson
kernel.
Brydges and Talaczyck in \cite{MR2240180} generalised this result to
quite general elliptic operators on $\mathbb{R}^m$ that can be written as
$\mc{A}=\mc{B}^\ast\mc{B}$. Adams, Koteck\'{y}, and M\"{u}ller adapted this work in  \cite{adams2013finite} 
to the discrete anisotropic setting. Their decomposition has the property that the kernels $\mc{C}_{A,k}$ are analytic
function of the operator $A$.
Later, Bauerschmidt gave a very general construction based on the finite propagation speed of the wave
equation and functional calculus \cite{MR3129804}.

The goal of this work is to improve the regularity of the previous constructions.
We show lower bounds
for the previous decomposition and modify the construction such that we can control
the decay behaviour of the kernels in Fourier space from above and below.
This implies that the integration map $F\rightarrow \mathbb{E}
(F(\cdot+\xi_{A,k}))$ is differentiable with respect
to the matrix $A$ uniformly in the size $N$ of the torus.
Our results hold for vector valued fields and we allow for higher order terms in the elliptic operator which corresponds 
to general quadratic finite range interaction. This allows us to handle, e.g., realistic models for discrete elasticity where next
to nearest neighbour interactions are included.
The construction is based on the Bauerschmidt decomposition in \cite{MR3129804}
but in a previous version of this project
\cite{frd2} we started from the construction in \cite{adams2013finite}.

The main application of finite range decompositions is the
renormalisation group approach
to problems in statistical mechanics. Renormalisation was introduced by Wilson
in the analysis of phase transitions \cite{MR709078}. Brydges and Yau
\cite{MR1048698} adapted Wilson's ideas to the statistical mechanics setting and initiated a long stream of
developments. Recently Bauerschmidt, Brydges, and Slade introduced a general framework and
investigate the $\phi^4$ model \cite{MR3269689} and the weakly
self avoiding random walk \cite{MR3339164}.
Their approach allows one to analyse
functional integrals $\mathbb{E}(K)$ where $K$ is a non-linear functional depending on a field
on a large lattice and the expectation is with respect to a (gradient) Gaussian field with long ranged correlation.
A key step is that this  integral can be 
rewritten as a series of integrations using a finite range decomposition of the Gaussian field.
Then one can analyse the correlation structure scale by scale.

Adams, Koteck\'y, and M\"uller extend this method to the anisotropic setting 
where $\ms{A}$ is not necessarily a multiple of the Laplacian and they
show strict convexity of the surface tension for non-convex
potentials for small tilt and low temperature \cite{adams2016preprint}. 
However, they face substantial technical difficulties because the integration map 
$F\rightarrow \mathbb{E}
(F(\cdot+\xi_{A,k}))$ is not differentiable with respect to $A$ for their finite range decomposition and
regularity is lost.
 The results of this paper 
allow to avoid this loss of regularity and therefore  simplify their analysis.

This paper is structured as follows: In Section \ref{secsetting} we introduce the setting, state the main result, and
give a brief motivation for the bounds of the new finite range decomposition. Then, in Section \ref{secconstruction}
we prove the main result based on the finite range decomposition from \cite{adams2013finite}.
Finally, in Section \ref{secloss} we show the smoothness of the integration map. 
For the convenience of the reader the appendix states the precise results of \cite{adams2013finite} when applied
to our setting.

Notation: In this paper we always understand inequalities of the form
\begin{align}
 A\geq B
\end{align}
for $A,B\in \mathbb{C}_{\mathrm{her}}^{n\times n}$ in the sense of Hermitian matrices,i.e., 
$A-B$ is semi-positive definite. Moreover we identify in this setting real numbers
with the corresponding multiple of the identity matrix, i.e., $t\in \R$ is
identified with $t\cdot\mathrm{Id}\in  \mathbb{C}_{\mathrm{her}}^{n\times n}$.

\section{Setting and Main Result}\label{secsetting}
Fix an odd integer $L\geq 3$, a dimension $d\geq 2$, and the number of components $m\geq 1$ for the rest of this paper. Let
$T_N=(\mathbb{Z}/(L^N\mathbb{Z}))^d$ be the $d$ dimensional discrete torus of side length $L^N$.
We equip $T_N$ with the quotient distances $d$ (or $|\cdot|$) and $d_\infty$ (or
$|\cdot|_\infty$) induced by the \blue{E}uclidean and  
maximum norm respectively.
We are interested in gradient fields. The condition of being a gradient is, however, complicated in dimension $d\geq 2$. Therefore we work with
usual fields modulo a 
constant which are in one-to-one correspondence to gradient fields.
Define
the space of $m$-component fields as
\begin{align}\label{eq:defofVN}
 \mc{V}_N=\{\varphi: T_N\rightarrow \mathbb{R}^m\}=(\mathbb{R}^m)^{T_N}.
\end{align}
Since we take the quotient modulo constant fields we can restrict our fields to have average zero, hence we set
\begin{align}\label{eqdefofadmissible}
 \mc{X}_N=\left\{\varphi\in \mc{V}_N: \sum_{x\in T_N}\varphi(x)=0\right\}.
\end{align}
Let the dot denote the standard scalar product on $\mathbb{R}^m$ which is later extended to $\mathbb{C}^m$. For $\psi, \varphi\in \mc{X}_N$ the
expression
\begin{align}
 \langle \varphi, \psi\rangle =\sum_{x\in T_N} \varphi(x)\cdot \psi(x)
\end{align} 
defines a scalar product on $\mc{X}_N$ and this turns $\mc{X}_N$ into a Hilbert space $\mc{H}$. 
The discrete forward and backward derivatives are defined by
\begin{align}\begin{split}
(\nabla_j\varphi)_r(x)&=\varphi_r(x+e_j)-\varphi_r(x)\qquad r\in\{1,\ldots, m\}, \quad j\in\{1,\ldots, d\},\\
(\nabla_j^*\varphi)_r(x)&=\varphi_r(x-e_j)-\varphi_r(x) \qquad r\in\{1,\ldots, m\}, \quad j\in\{1,\ldots, d\}.
\end{split}\end{align}
Here $e_j$ are the standard unit vectors in $\mathbb{Z}^d$.
Forward and backward derivatives are adjoints of each other. 

Next we introduce the set of operators for which we discuss finite range
decompositions.
We fix some necessary notation.
Let $\mc{M}\subset \mathbb{N}_0^d\setminus \{0,\ldots,0\}$ be a finite set of
multiindices which is fixed for the rest of this work.
We assume that $\mc{M}$ contains all multiindices of order 1,
i.e., the gradient.
We define $R=\max_{\alpha\in\mc{M}} |\alpha|_\infty$.
With $\mc{G}=(\mathbb{R}^m)^\mc{M}$ we denote the space of discrete derivatives
with $\alpha\in\mc{M}$ and for any field $\p$ we denote $D\p(x)\in \mc{G}$ for the vector
$(\nabla^\alpha \p(x))_{\alpha\in\mc{M}}$. 
We equip $\mc{G}$ with the standard scalar product
\begin{align}
 (D\p(x),D\psi(x))=\sum_{\alpha\in\mc{M}}
(\nabla^\alpha\p(x),\nabla^\alpha\psi(x))_{\mathbb{R}^m}.
\end{align}
For any $z\in\mc{G}$ we write
$z^\nabla$ for the restriction of $z$ to the gradient components $\alpha=e_1,\ldots, e_d$. 
We consider non-negative quadratic forms $Q:\mc{G}\rightarrow \mathbb{R}$ that
satisfy
\begin{align}\label{eq:conditionQ}
Q(z)\geq \omega_0 |z^\nabla|^2 
\end{align}
 for some $\omega_0>0$. This condition for general finite range
interactions already appeared in \cite{MR1461951}.
To keep the notation consistent with \cite{adams2013finite} we 
denote the corresponding symmetric generator by
$A:\mc{G}\rightarrow\mc{G}$. By definition $A$ satisfies $(z,Az)=Q(z)$. 
The matrix elements of $A$ are denoted by
$A_{\alpha\beta}\in \mathbb{R}^{m\times m}$, i.e.,
$Q(D\p(x),D\psi(x))=( AD\p(x),D\psi(x))=\sum_{\alpha,\beta\in\mc{G}}
\nabla^\alpha \p(x) A_{\alpha\beta}\nabla^\beta \psi(x)$.
Usually we consider the set of generators $A$ whose operator norm with respect to the standard scalar
product on $\mc{G}$ satisfies
\begin{align}\label{eq:conditionA}
 \lVert A\rVert\leq \Omega_0
\end{align}
for some fixed $\Omega_0>0$.
We denote the set of symmetric operators $A$ such that \eqref{eq:conditionQ}
and \eqref{eq:conditionA} hold by $\mc{L}(\mc{G},\omega_0,\Omega_0)$. We think of $\mc{M}$,
 $\omega_0$, and 
$\Omega_0$ as fixed and in the following and all constants
depend on $d$, $m$, $R$, $\omega_0$, and $\Omega_0$ in the following.
From the operator $A$ we obtain a corresponding elliptic finite difference
operator
\begin{align}
\ms{A}=\sum_{\alpha,\beta\in \mc{G}} (\nabla^\alpha)^\ast
A_{\alpha\beta}\nabla^\beta.
\end{align}
The operator $\ms{A}$ defines a Gaussian measure $\mu_A$ on $\mc{X}_N$ that is given by
\begin{align}\label{eq:defGaussmeasure}
 \mu_A(\d \p)=\frac{e^{-\frac12\langle \p,\ms{A}\p \rangle}}{\sqrt{\det\left(2\pi \ms{A}^{-1}\right)}}\lambda(\d\p)
\end{align}
where $\lambda$ denotes the $m(L^{Nd}-1)$ dimensional Hausdorff-measure on the affine space $\mc{X}_N$.

In the following we discuss finite range
decompositions for the Greens functions of operators $\ms{A}:\mc{X}_N\rightarrow\mc{X}_N$ 
with $A\in \mc{L}(\mc{G},\omega_0,\Omega_0)$.
In \cite{MR3129804}  and \cite{adams2013finite} only the case where
$\mc{M}=\{e_1,\ldots, e_d\}$ was discussed. 
By assumption $\langle \p,\ms{A}\p\rangle \geq \omega_0\langle
\nabla\p,\nabla\p\rangle$ which implies positivity of the operator.
Hence the operator $\mathscr{A}$ is invertible and we call its
inverse operator $\mathscr{C}=\mathscr{A}^{-1}$. The operator $\ms{C}$ is the covariance operator of the Gaussian measure $\mu_A$. Since $\mathscr{A}$ 
is translation invariant (i.e.,  $[\tau_x, \mathscr{A}]=0$ where
$\tau_x:\mc{X}_N\rightarrow \mc{X}_N$ denotes the translation operator $(\tau_x\phi)(y)=\phi(y-x)$ and $[A,B]=AB-BA$ \blue{is} the commutator) the
same is true for $\mathscr{C}$. 
Translation invariance implies that the operator $\mathscr{C}$ has a unique kernel  $\mc{C}:T_N\rightarrow \mathbb{R}^{m\times m}$ (cf. Lemma 3.5
in \cite{adams2013finite}) such that
\begin{align}\label{defofkernel}
 (\mathscr{C}\p)(x)=\sum_{y\in T_N} \mc{C}(x-y)\p(y)
\end{align}
and $\mc{C}\in \mc{M}_N$ where $\mc{M}_N$ is the space of $m\times m$
matrix valued functions on $T_N$ with average zero, i.e., $\mc{C}_{ij}\in \mc{X}_N$ for all $1\leq i,j\leq m$ (this condition can always be
satisfied because constant kernels generate the zero-operator on $\mc{X}_N$).

\begin{remark}\label{remark:VnvsXn}
Note that the kernel  $\mc{C}$ defines an extension $\overline{\ms{C}}$ of the operator $\ms{C}$ from the space $\mc{X}_N$
to the space $\mc{V}_N$ that annihilates constant fields. Since the space of constant fields is the orthogonal complement of 
$\mc{X}_N$ in $\mc{V}_N$ the operator $\overline{\ms{C}}$ is a positive semi-definite operator on an euclidean space. Hence it is the covariance of a Gaussian measure on $\mc{V}_N$ and this measure 
is concentrated on $\mc{X}_N$ and its restriction to $\mc{X}_N$ agrees with $\mu_A$ given by equation \eqref{eq:defGaussmeasure}.
This implies by general Gaussian calculus that
\begin{align}
 \mathbb{E}_{\mu_A} \left(\p(x)\p(y)\right) = \mc{C}(x-y).
\end{align}
\end{remark}

Let us define the term {\it finite range}.
\begin{lemmaanddef}\label{deffiniterange}
Let $\ms{C}:\mc{X}_N\rightarrow \mc{X}_N$ be a translation invariant operator with kernel $\mc{C}\in\mc{M}_N$. We say that $\ms{C}$ has range at
most $l$ for $2l+3<L^N$ if the following three equivalent statements hold.
\begin{enumerate}
\item $\langle \p,\ms{C}\psi\rangle=0$ for all $\p,\psi\in \mc{X}_N$ with $\textrm{dist}_\infty(\textrm{supp}\,\p,\textrm{supp}\, \psi)>l$.
\item There is $M\in \textrm{Mat}_{m,m}(\mathbb{R})$ such that $\mc{C}(x)=M$ for $\d_\infty(x,0)>l$.
\item $\mathrm{supp}\, \ms{C}\p\subset \mathrm{supp}\,\p+\{-l,\ldots,l\}^d$ for all $\p\in\mc{X}_N$. 
\end{enumerate}
For $2l+3\geq L^N$ property $2$ shall be the defining property.
\end{lemmaanddef}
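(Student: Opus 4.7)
The plan is to establish the cyclic chain of implications $2\Rightarrow 3\Rightarrow 1\Rightarrow 2$. The first two follow directly from the definitions combined with the mean-zero constraint built into $\mc{X}_N$; all of the content sits in $1\Rightarrow 2$.

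For $2\Rightarrow 3$: assume $\mc{C}(x)=M$ whenever $d_\infty(x,0)>l$. Then for any $\p\in\mc{X}_N$ and any $x$ outside $\mathrm{supp}\,\p+\{-l,\ldots,l\}^d$, every $y\in T_N$ with $\p(y)\neq 0$ satisfies $d_\infty(x-y,0)>l$, so
\[(\ms{C}\p)(x)=\sum_{y\in T_N}\mc{C}(x-y)\p(y)=M\sum_{y\in T_N}\p(y)=0,\]
using $\sum_y\p(y)=0$. For $3\Rightarrow 1$: if $\mathrm{dist}_\infty(\mathrm{supp}\,\p,\mathrm{supp}\,\psi)>l$, applying (3) to $\psi$ yields $\mathrm{supp}\,\ms{C}\psi\cap\mathrm{supp}\,\p=\emptyset$ and hence $\langle\p,\ms{C}\psi\rangle=0$.

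The substantive step is $1\Rightarrow 2$. The idea is to probe pointwise values of $\mc{C}$ by testing against differences of point masses, which are the simplest mean-zero fields available in $\mc{X}_N$. Let $e^{(i)}$ denote the $i$-th standard basis vector of $\mathbb{R}^m$. Fix $i,j\in\{1,\ldots,m\}$, $s\in\{1,\ldots,d\}$, $y\in T_N$, and apply (1) with $\psi=(\delta_y-\delta_{y+e_s})e^{(j)}$ and $\p=(\delta_x-\delta_{x'})e^{(i)}$. A short computation gives $(\ms{C}\psi)_i(u)=\mc{C}_{ij}(u-y)-\mc{C}_{ij}(u-y-e_s)$, so the vanishing of $\langle\p,\ms{C}\psi\rangle$ becomes
\[\mc{C}_{ij}(x-y)-\mc{C}_{ij}(x-y-e_s)=\mc{C}_{ij}(x'-y)-\mc{C}_{ij}(x'-y-e_s)\]
whenever both $x$ and $x'$ lie in $\{u:d_\infty(u,\{y,y+e_s\})>l\}$. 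Writing $v=x-y$, this says the first difference $D_s(v):=\mc{C}_{ij}(v)-\mc{C}_{ij}(v-e_s)$ takes a common value $c_s$ on the set of $v$ such that both $v$ and $v-e_s$ lie outside $B_l:=\{u:d_\infty(u,0)\le l\}$. To conclude I show $c_s=0$ and then invoke connectivity. Choose $v\in T_N$ with $|v_r|>l$ for some $r\neq s$; then every point of the loop $v,v-e_s,\ldots,v-L^Ne_s\equiv v$ lies outside $B_l$ because its $r$-coordinate is the unchanged value $v_r$. Telescoping $D_s\equiv c_s$ along the loop produces $L^Nc_s=\mc{C}_{ij}(v)-\mc{C}_{ij}(v)=0$, hence $c_s=0$. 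Consequently $\mc{C}_{ij}$ is constant along every nearest-neighbour edge of $T_N$ with both endpoints in $T_N\setminus B_l$, and the hypothesis $2l+3<L^N$ ensures that $T_N\setminus B_l$ is connected in the nearest-neighbour graph, so $\mc{C}_{ij}$ is constant on $T_N\setminus B_l$. Collecting the entries gives the matrix $M$ of statement (2).

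The hard part is exactly $1\Rightarrow 2$: the mean-zero constraint on $\mc{X}_N$ forbids testing against single point masses, forcing the pointwise information about $\mc{C}$ to be reconstructed from discrete second differences, and the telescoping identity $L^Nc_s=0$ is precisely what requires the torus-size assumption $2l+3<L^N$.
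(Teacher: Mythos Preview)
Your argument is correct. The implications $2\Rightarrow 3\Rightarrow 1$ are handled cleanly, and your proof of $1\Rightarrow 2$ via testing against differences of point masses, establishing that the discrete gradient $D_s$ vanishes on edges with both endpoints outside $B_l$ (using the telescoping loop along a line with one coordinate fixed outside $B_l$, which requires $d\geq 2$), and then invoking connectivity of $T_N\setminus B_l$, is a complete and standard route.

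The paper itself does not give an argument here: it simply cites Lemma~3.6 of Adams--Koteck\'y--M\"uller and records that $(ii)\Rightarrow(iii)\Rightarrow(i)$ holds unconditionally. So your proof supplies exactly the details the paper defers to that reference. One minor inaccuracy in your closing remark: the telescoping step itself only needs $L^N>2l+1$ together with $d\ge 2$ to find a line entirely outside $B_l$; the condition $2l+3<L^N$ is what you actually use to guarantee connectivity of $T_N\setminus B_l$ (so that constancy along edges propagates globally). This does not affect the validity of the proof.
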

\begin{proof}
This is Lemma 3.6 in \cite{adams2013finite}.  The implication $(ii)\Rightarrow (iii)\Rightarrow (i)$ is always true. 
\end{proof}
Observe that the operator $\ms{A}$ introduced before has range at most $R$.
We seek a decomposition of $\mathscr{C}$ into translation invariant and positive operators $\mathscr{C}_k$ with
\begin{align}\label{eqcovariancedecomp}
 \mathscr{C}=\sum_{k=1}^{N+1} \mathscr{C}_k
\end{align}
such that the range of $\mathscr{C}_k$ is smaller than ${L^k}/{2}$ and the $\mathscr{C}_k$ satisfy certain bounds. This implies that $\ms{C}_k$ is the covariance
operator of a Gaussian measure on $\mc{X}_N$ and the 
Gaussian variables $\nabla_i\p(x)$ and $\nabla_j\p(y)$ (where $\p$ is distributed according to this measure) are uncorrelated and therefore
independent for $|x-y|_\infty\geq L^k/{2}$.
A decomposition satisfying the finite range property and translation invariance will be called finite range decomposition. Note, however, that
they are only useful in the presence of strong bounds because, e.g., \blue{the trivial decomposition}
$\ms{C}_{N+1}=\ms{C}$ and $\ms{C}_k=0$ for $k\leq N$ has the finite range property.

Translation invariant operators are diagonal in Fourier space which will be introduced briefly because the
strongest bounds of the kernel $\mc{C}_k$ are the bounds for its Fourier transform.
Define the dual torus
\begin{align}
 \widehat{T}_N=
 \left\{\frac{(-L^N+1)\pi}{L^N},\frac{(-L^N+3)\pi}{L^N},\ldots,
\frac{(L^N-1)\pi}{L^N}\right\}^d.
\end{align}
For $p\in\widehat{T}_N$ the exponentials $f_p:T_N\rightarrow \mathbb{C}$ with $f_p(x)=e^{ip\cdot x}$ are well defined since $e^{ip\cdot x}$ is a
$(L^N\mathbb{Z})^d$ periodic function on $\mathbb{Z}^d$. Here and in the following we use 
the immediate generalisations to complex valued fields.
The Fourier transform $\widehat{\psi}:\widehat{T}_N\rightarrow \mathbb{C}$ of a scalar field $\psi:T_N\rightarrow \mathbb{C}$ is defined by
\begin{align}
 \widehat{\psi}(p)=\sum_{x\in T_N} e^{-ip\cdot x}\psi(x)=\sum_{x\in T_N}f_p(-x)\psi(x)
\end{align}
and the inverse transform is given by
\begin{align}\label{inverseFourier}
 \psi(x)=\frac{1}{L^{Nd}}\sum_{p\in \widehat{T}_N}e^{ip\cdot x}\widehat{\psi}(p).
\end{align}
The Fourier transform maps the space $\mc{X}_N$ bijectively on the subspace $\{\widehat{\psi}:\widehat{T}_N\rightarrow
\mathbb{C}|\,\widehat{\psi}(0)=0\}$.
Clearly $\widehat{f}_p(q)=L^{Nd}\delta_{pq}$ for $p,q\in \widehat{T}_N$. For matrix- or vector-valued functions
we define the Fourier transform component-wise.
The Fourier transform satisfies
\begin{align}
 \langle \psi, \varphi\rangle=\frac{1}{L^{Nd}}\sum_{p\in \widehat{T}_N} {\widehat{\psi}}(p)\cdot\widehat{\varphi}(p)
\end{align}
where we extended the scalar product anti-linearly in the first component, this means $v\cdot w=\sum_{i=1}^m \overline{v}_iw_i$. Hence the
functions $L^{-\frac{Nd}{2}}f_pe_i$ for $p\neq 0$ and $e_i\in\mathbb{R}^m$ a standard unit vector form an orthonormal basis of
$\mc{X}_N$ (complexified).
Let $\mc{A},\mc{B}\in \mc{M}_N$ be two matrix-valued functions. Define the convolution by
\begin{align}\label{defconvolution}
\mc{A}\ast\mc{B}(x)=\sum_{y\in T_N} \mc{A}(x-y)\mc{B}(y).
\end{align}
Note that by \eqref{defofkernel} the kernel of the composition  $\ms{A}\ms{B}$ of the operators $\ms{A}$ and $\ms{B}$ with kernels $\mc{A}$ and
$\mc{B}$ is given by $\mc{A}\ast\mc{B}$.

Consider a translation invariant operator $\ms{K}$ with kernel $\mc{K}\in\mc{M}_N$. 
As in the continuous setting the Fourier transform of a convolution is the product
of the Fourier transforms, i.e.,
\begin{align}\label{convolutionofkernel}
 \widehat{\ms{K}\psi}(p)=\widehat{\mc{K}\ast \psi}(p)=\widehat{\mc{K}}(p)\widehat{\psi}(p).
\end{align}
Hence translation invariant operators are indeed (block) diagonal in Fourier space with
eigenvalues given by the Fourier transform of the kernel.

Next we calculate the Fourier modes of the kernel of the operator $\ms{A}$. 
A simple calculation shows that $\widehat{\nabla_i\p}(p)=q_i(p)\widehat{\p}(p)$
and
$\widehat{\nabla_i^\ast\p}(p)=\overline{q}_i(p)\widehat{\p}(p)$ where
$q_i(p)=e^{ip_i}-1$ and $\overline{q}_i(p)=e^{-ip_i}-1$ denotes the complex conjugate.
This implies
\begin{align}
 \widehat{\mc{A}}(p)=\sum_{\alpha,\beta\in\mc{M}} \overline{q}(p)^\alpha
A_{\alpha\beta} q(p)^\beta
\end{align}
where $q(p)^\alpha=\prod_{i=1}^d q_i(p)^{\alpha_i}$.
Note 
again the formal similarity to the continuum setting.
The estimate $\frac{4}{\pi^2}t^2\leq |e^{it}-1|^2 \leq t^2$ for $t\in [-\pi,\pi]$ immediately implies that $\frac{4}{\pi^2}|p|^2\leq |q(p)|^2\leq
|p|^2$ for 
any $p\in \widehat{T}_N$. 
Hence we find using $|p|<\sqrt{d}\pi$ and $|\alpha|_1\leq dR$ for
$\alpha\in\mc{M}$
\begin{align}
  \lVert \widehat{\mc{A}}(p)\rVert=\left\Vert \sum_{\alpha,\beta\in\mc{M}}
\overline{q}(p)^\alpha q(p)^\beta  A_{\alpha\beta}\right\Vert
\leq \lVert A\rVert \left|
\left(q(p)^\alpha\right)_{\alpha\in\mc{M}}\right|^2_{\mc{G}}
\leq \Omega_0 |p|^2 \cdot|\mc{M}| (d\pi^2)^{dR}
\end{align}
On the other hand we also find a lower bound for the Fourier modes and
$a\in\mathbb{R}^m$ using the assumption \eqref{eq:conditionQ} for $z=(aq(p)^\alpha)_{\alpha\in\mc{M}}$ 
\begin{align}
 a\cdot\widehat{\mc{A}}(p)a\geq \omega_0|q(p)|^2|a|^2\geq
\frac{4\omega_0}{\pi^2}|p|^2|a|^2.
\end{align}
Together this yields the important bound
\begin{align}\label{Ahatestimate}
 \lVert \widehat{\mc{A}}(p)\rVert\leq  \Omega |p|^2 \quad \text{ and }
\quad \widehat{\mc{A}}(p)\geq  \omega |p|^2.
\end{align}
Finally we note that by the Fourier inversion formula  \eqref{inverseFourier} for any multiindex $\alpha$ 
\begin{align}\label{derivativeoffourier}
 \nabla^\alpha\mc{C}_k(x)=\frac{1}{L^{Nd}}\sum_{p\in \widehat{T}_N} \widehat{\mc{C}_k}(p)q(p)^\alpha f_p(x).
\end{align}
The $L^\infty-L^1$ bounds for the Fourier transform also hold in the discrete case
\begin{align}\label{l1linftyFourier}
\lVert\nabla^\alpha\mc{C}_k(x)\rVert \leq\frac{1}{L^{Nd}}\sum_{p\in \widehat{T}_N} \lVert \widehat{\mc{C}}_k(p)\rVert |p|^{|\alpha|}.
\end{align}
Finally we introduce a dyadic partition of the dual torus for $j=1,\ldots,N$
\begin{align}\label{eq:defofannulus}
 \bd{A}_j&=\{p\in\widehat{T}_N:\, L^{-j-1}<|p|\leq L^{-j}\}\\
 \bd{A}_0&=\{p\in\widehat{T}_N:\, L^{-1}<|p|\}.
\end{align}
When the size of the torus $\widehat{T}_N$ is not clear from the context we write
$\bd{A}_j^N=\bd{A}_j$ for clarity.
Observe that 
\begin{align}\label{eq:sizeofAj}
 |\bs{A}_j|\leq \kappa(d) L^{(N-j)d}
\end{align}
for some constant $\kappa(d)>0$.

Let us now state the main result of \cite{MR3129804} adapted to our setting.
\begin{theorem}\label{AKMFRD}
For $d\geq 2$ and $A\in \bigcup_{\omega_0>0}\mc{L}(\mc{G},\omega_0, \Omega_0)$
there is a decomposition
$\ms{C}_A=\sum_{k=1}^{N+1} \ms{C}_{A,k}$ where $\ms{C}_{A,k}$ are positive,
translation invariant operators and the map $A\rightarrow \mc{C}_{A,k}(x)$
only depends on $R$ and $\Omega_0$ but not on $\omega_0$. The $\mc{C}_{A,k}$ are polynomials in ${A}$ for $1\leq k\leq N$ and  real analytic for $k=N+1$. This decomposition has the
finite range property
\begin{enumerate}
\item $\mc{C}_{A,k}(x)=-M_k$ for $1\leq k\leq N$ and $|x|_\infty\geq {L^k}/{2}$ where $M_k\in
\mathbb{R}^{m\times m}$ are positive semi-definite matrices independent of
$A$ (for $\Omega_0$ and $\mc{G}$ fixed). In particular $\ms{C}_{A,k}$ has range smaller
${L^k}/{2}$.
\end{enumerate}
Moreover we have the following bounds for $\omega_0>0$ and $A\in\mc{L}(\mc{G},\omega_0, \Omega_0)$  
\begin{enumerate}
\setcounter{enumi}{1} 
\item In Fourier space  the following bounds hold for any positive integer
$\ell$ and $n$ and symmetric $\dot{A}\in \mc{L}(\mc{G})$
\begin{align}\label{fourierbounds}
\sup_{\lVert \dot{A}\rVert\leq 1} \lVert D^\ell_A\widehat{\mc{C}}_{A,k}(p)(\dot{A},\ldots,\dot{A})\rVert\leq
  \begin{cases}
   C_{\ell,\bar{n}}|p|^{-2}(|p|L^{(k-1)})^{-\bar{n}}
\; &
\text{for $|p|>L^{-k}$ ($p\in \bs{A}_j$, $j\leq k-1$),} \\
   C_\ell L^{2k}\lVert\dot{A}\rVert^\ell &\text{for $p\leq L^{-k}$ ($p\in \bs{A}_j$, $j\geq k$).}
 \end{cases}
\end{align}
Here $C_{\ell}$ and $C_{\ell,\bar{n}}$ are constants that do not depend on $L$, $N$, or $k$.
The corresponding lower bound reads
\begin{align}\label{keylowerbound}
\widehat{\mc{C}}_{A,k}(p)\geq  
 \begin{cases}
   c\min\left(|p|^{-2},L^2\right)\quad &\text{for}\quad
k=1,\\
  cL^{2k}\quad &\text{for $|p|<L^{-k}$ ($p\in\bs{A}_j$, $j\geq k$)}.
 \end{cases}
\end{align}
for some constant $c>0$ depending on the same quantities as $C_\ell$.
 \item In particular, we have
\begin{align}\label{discretebounds}
\sup_{x\in T_N}\sup_{\lVert\dot{A}\rVert\leq 1}\left\lVert\nabla^\alpha
D_A^\ell\mc{C}_{A,k}(x)(\dot{A},\ldots,\dot{A})\right\rVert\leq
\begin{cases}
C({\alpha},\ell)
L^{-(k-1)(d-2+|\alpha|)}\;&\text{for}\;d+|\alpha|>2\\
C({\alpha},\ell)\ln(L)
L^{-(k-1)(d-2+|\alpha|)}\;&\text{for}\;d+|\alpha|=2.
\end{cases}
\end{align}
Here $C(\alpha,\ell)$ denotes a constant that does not depend on $L$, $N$, and $k$.
\end{enumerate}
\end{theorem}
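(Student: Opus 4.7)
The plan is to invoke Bauerschmidt's construction from \cite{MR3129804}, adapted to the higher-order derivative setting. The only inputs that construction really needs are self-adjointness, translation invariance, and finite range of $\ms{A}$, together with the spectral bounds \eqref{Ahatestimate}. All three are preserved here, with the range bound $R$ merely replacing the nearest-neighbour value $1$, so the construction transfers with $R$ appearing in the relevant constants and the condition \eqref{eq:conditionQ} substituting for strict ellipticity in the gradient direction.

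My first step is to produce a spectral partition of unity $\sum_{k=1}^{N+1}g_k(\lambda)=\lambda^{-1}$ on the spectrum of $\ms{A}$ (minus zero), where for $1\leq k\leq N$ the function $g_k$ is a polynomial of degree at most $\lfloor L^k/(2R)\rfloor$ concentrated around $\lambda\sim L^{-2k}$, while $g_{N+1}$ is an analytic remainder. Bauerschmidt achieves this through an integral representation based on $\cos(t\sqrt{\ms{A}})$, the discrete analogue of the wave group, which has finite propagation speed on $T_N$. I would set $\ms{C}_{A,k}=g_k(\ms{A})$ by spectral calculus. Translation invariance is inherited from $\ms{A}$; positivity from $g_k\geq 0$; the polynomial (resp.\ analytic) dependence on $A$ from the form of $g_k$; and, crucially, the finite range property follows because a polynomial of degree $\leq L^k/(2R)$ in $\ms{A}$ has range at most $L^k/2$, $\ms{A}$ itself having range $R$. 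The constancy of the kernel outside that range, with negative semi-definite value $-M_k$ as in Lemma \ref{deffiniterange}(ii), is then forced by the mean-zero condition on kernels in $\mc{M}_N$ combined with the positivity of the operator.

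For the Fourier bounds \eqref{fourierbounds} I would use the spectral identity $\widehat{\mc{C}_{A,k}}(p)=g_k(\hat{\mc{A}}(p))$. Since by \eqref{Ahatestimate} the argument $\hat{\mc{A}}(p)$ lies in the matrix interval $[\omega|p|^2,\Omega|p|^2]$, the shape of $g_k$ (plateau of height $\sim L^{2k}$ for $\lambda\lesssim L^{-2k}$ and rapid decay above) translates directly into the two regimes of \eqref{fourierbounds}. The $A$-derivatives are handled by the chain rule, producing $g_k^{(j)}(\hat{\mc{A}}(p))$ multiplied by factors polynomial in $\hat{\mc{A}}(p)$ and $D_A\hat{\mc{A}}(p)$, the latter being $O(|p|^2)$ by a variant of \eqref{Ahatestimate}. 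The decay factor $(|p|L^{k-1})^{-\bar{n}}$ for $|p|>L^{-k}$ reflects $\bar{n}$-fold smoothness of $g_k$ above the plateau. The lower bound \eqref{keylowerbound} is obtained by choosing the partition so that $g_k\gtrsim L^{2k}$ on $\{\lambda\leq L^{-2k}\}$ and $g_1\gtrsim \lambda^{-1}$ on the high-frequency regime.

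Finally, \eqref{discretebounds} will follow by feeding \eqref{fourierbounds} into the $L^\infty$--$L^1$ bound \eqref{l1linftyFourier}, splitting the Fourier sum over the dyadic annuli $\bs{A}_j$ of \eqref{eq:defofannulus} and applying the counting estimate \eqref{eq:sizeofAj}; summability over $j$ needs a sufficiently large choice of $\bar n$, and the logarithmic factor in the critical case $d+|\alpha|=2$ appears when the geometric sum degenerates. I expect the main technical obstacle to be keeping the constants $C_{\ell}$, $C_{\ell,\bar{n}}$, and $C(\alpha,\ell)$ uniform in $N$ and, simultaneously, independent of $\omega_0$. The $N$-uniformity is most delicate for $k=N+1$, where $g_{N+1}$ is no longer polynomial and the analytic spectral dependence on $A$ must be controlled globally. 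Independence from $\omega_0$ forces the partition $\{g_k\}$ to be constructed using only the upper spectral bound $\Omega_0$, with $\omega_0$ entering only through the regions on which the pointwise estimates are evaluated via \eqref{Ahatestimate}.
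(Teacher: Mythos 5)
Your plan is essentially the paper's own construction: Bauerschmidt's wave-equation/functional-calculus representation $\lambda^{-1}=\int_0^\infty tW_t(\lambda)\,\d t$ with $W_t$ a non-negative polynomial of degree $\leq t$, sliced into dyadic $t$-ranges to give $g_k(\lambda)=\int_{L^{k-1}/(2R)}^{L^k/(2R)} tW_t(\lambda)\,\d t$ and $\ms{C}_{A,k}=g_k(\ms{A})$; the Fourier and discrete bounds then come from the spectral identity $\widehat{\mc{C}}_{A,k}(p)=g_k(\widehat{\mc{A}}(p))$, the bounds \eqref{Ahatestimate} on $\mathrm{spec}\,\widehat{\mc{A}}(p)$, and the $L^\infty$--$L^1$ estimate over dyadic annuli. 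Two details are glossed over: the sign of $-M_k$ does not follow from positivity of $\ms{C}_{A,k}$ on $\mc{X}_N$ (the mode $p=0$ where $\widehat{\mc{A}}=0$ is excluded from $\mc{X}_N$, so only $g_k|_{\mathrm{spec}\,\ms{A}\setminus\{0\}}\geq 0$ is implied), but rather from $W_t(0)\geq 0$, which forces the constant coefficient $c_{k,0}=g_k(0)$ of the polynomial to be non-negative and this is the only term surviving outside the range; and for $m>1$ the $A$-derivatives of $g_k(\widehat{\mc{A}}(p))$ cannot be read off a naive chain rule because $\widehat{\mc{A}}(p)$ and $D_A\widehat{\mc{A}}(p)$ do not commute, so one needs a divided-difference/Cauchy-integral bound of the type in Lemma \ref{le:deriv} giving $\lVert\frac{\d^\ell}{\d s^\ell}g_k(\widehat{\mc{A}}_{A+s\dot A}(p))\rVert\lesssim \sup_{\lambda\in[\lambda_1,\lambda_m]}|g_k^{(\ell)}(\lambda)|\,\lVert D_A\widehat{\mc{A}}(p)\rVert^\ell$.
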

This is basically Theorem 1.2 in \cite{MR3129804} except for the simple lower
bound. 
For the convenience of the reader we include all relevant calculations for the
concrete setting in the appendix. The estimates are similar to the ones that
appeared in \cite{MR3332940}.

Our main result is an extension of this result which additionally gives controlled decay of the kernels in Fourier space. In particular the
operators $\ms{C}_{k,A}$ and $\ms{C}_{k,A'}$ are comparable for the new construction.
The main application of the following theorem is the regularity of the
renormalisation map which is stated in Proposition \ref{proprenormmap} in
Section \ref{secloss}.
\begin{theorem}\label{finalFRD}
Let $L>3$
odd, $N\geq 1$ as before and let $\tilde{n}>n$ be two integers. Fix 
$\Omega_0>\omega_0>0$ and consider the family of 
symmetric, positive operators $A\in \mc{L}(\mc{G},\omega_0,\Omega_0)$.
Then there
exists
 a family of finite range decomposition $\ms{C}_{A,k}$ of $\ms{C}_{A}$ such that 
 the
 map $A\rightarrow \mc{C}_{A,k}(x)$
only depends on $R$, $\omega_0$, and $\Omega_0$. The map is a polynomial in ${A}$ for $1\leq k\leq N$ and real analytic for $k=N+1$. The operators $\ms{C}_{A,k}$ satisfy
 \begin{align}\begin{split}
 & \ms{C}_{A}=\sum_{k=1}^{N+1} \ms{C}_{A,k},\\
 & \ms{C}_{A, k}(x)=M_k\, \, \text{for  $1\leq k\leq N$, and }\, |x|_\infty\geq \frac{L^k}{2},
 \end{split}\end{align}
 where $M_k\leq 0$ and $M_k$ is independent of $A$.
 The $\alpha$-th discrete derivative for all $\alpha$ with
$|\alpha|_1\leq n$ is bounded by
\begin{align}\label{eq:discretebounds}
\sup_{x\in T_N}\sup_{\lVert\dot{A}\rVert\leq 1}\left\lVert\nabla^\alpha
D_A^\ell\mc{C}_{A,k}(x)(\dot{A},\ldots,\dot{A})\right\rVert\leq
\begin{cases}
3C({\alpha},\ell)
L^{-(k-1)(d-2+|\alpha|)}\;&\text{for}\;d+|\alpha|>2\\
3C({\alpha},\ell)\ln(L)
L^{-(k-1)(d-2+|\alpha|)}\;&\text{for}\;d+|\alpha|=2.
\end{cases}
\end{align}
  where the constants are the same as in Theorem \ref{AKMFRD}.
 We have the following lower bounds in Fourier space for some
$c=c(\tilde{n})>0$  
  \begin{align}\begin{split}\label{finalfrdlower}
 \widehat{\mc{C}}_{A,k}(p)&\geq 
\begin{cases}	cL^{-2(d+\tilde{n})-1}L^{2j}L^{(k-j)(-d+1-n)}&\text{ for }
p\in \bd{A}_j\text{ and } j< k\\
cL^{-2(d+\tilde{n})-1}L^{2k}&\text{ for } p\in \bd{A}_j\text{ and }
j\geq k.
\end{cases}
 \end{split}\end{align}
Similar upper bounds hold with some constant $C=C(\tilde{n})$
\begin{align}
\begin{split} \label{finalfrdupper}
\lVert\widehat{\mc{C}}_{A,k}(p)\rVert&\leq 
 \begin{cases}
CL^{2(d+\tilde{n})+1}L^{2j}L^{(k-j)(-d+1-n)}&\text{ for } p\in \bd{A}_j\text{ and
} j< k\\
CL^{2k}&\text{ for } p\in \bd{A}_j\text{ and } j\geq k.
\end{cases}
 \end{split}
\end{align}
 For the derivatives of the kernels
and  
for $\lVert \dot{A} \rVert\leq 1$, $\ell\geq 1$, $p\in A_{j}$ we  have the following stronger bounds in Fourier space
\begin{align}
 \left\lVert \frac{\d^\ell}{\d s^\ell}\widehat{\mc{C}}_{A+s\dot{A},k}(p)\right\rVert\leq
\begin{cases}
CL^{2(d+\tilde{n})+1}L^{2j}L^{(k-j)(-d+1-\tilde{n})}&\text{ for } p\in
\bd{A}_j\text{ and } j< k\\
CL^{2k}&\text{ for } p\in \bd{A}_j\text{ and } j\geq k,
\end{cases}
 \end{align}
 i.e.,  the decay of the derivative of the Fourier modes for large $p$ is governed by $\tilde{n}$ and not by $n$ as in \eqref{finalfrdupper}. 
The lower and upper bound can be combined to give for $\ell\geq 1$ and $p\in A_j$
\begin{align}\label{keyquotientbound}
\left\lVert \frac{\d^\ell}{\d s^\ell}\widehat{\mc{C}}_{A+s\dot{A},k}(p)\right\rVert\cdot
\left\lVert \widehat{\mc{C}}_{A,k}(p)^{-1}\right\rVert\leq 
\begin{cases}
\myK L^{4(d+\tilde{n})+2}L^{(k-j)(n-\tilde{n})}  \;&\text{ for } p\in \bd{A}_j\text{
and } j< k\\
\myK L^{2(d+\tilde{n})+1}	  \;&\text{ for } p\in \bd{A}_j\text{ and
} j\geq k.
\end{cases}
\end{align}
The constants $\myK=\myK(\tilde{n},\ell)$ do not depend on $N$, $k$, or $L$.
\end{theorem}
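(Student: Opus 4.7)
The plan is to start from the Bauerschmidt/AKM decomposition $\ms{C}^{\mathrm{AKM}}_{A,k}$ supplied by Theorem \ref{AKMFRD} and repair its sole defect---the missing Fourier lower bound on annuli $\bd{A}_j$ with $j<k$---by adding a small $A$-independent positive correction at each low scale and compensating at the top scale. The upper bound \eqref{finalfrdupper} already follows from \eqref{fourierbounds} with the choice $\bar n=d-1+n$, the stronger derivative upper bound with $\bar n=d-1+\tilde n$, the quotient bound \eqref{keyquotientbound} is the ratio of the two, and the discrete bound \eqref{eq:discretebounds} is inherited from AKM (the factor $3$ is precisely the slack left for the correction). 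The remaining task is to produce the lower bound \eqref{finalfrdlower}; the regime $j\geq k$ is already covered by \eqref{keylowerbound}, so only $j<k$ needs work.

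\textbf{Construction.} I would build an $A$-independent family of positive, translation invariant kernels $\ms{B}_1,\ldots,\ms{B}_N$ with range strictly less than $L^k/2$, whose Fourier modes satisfy (up to absolute constants)
\begin{align*}
\widehat{\mc{B}}_k(p)\;\sim\;\min\!\bigl(L^{2k},\;L^{-k(d+n-1)}|p|^{-(d+n+1)}\bigr).
\end{align*}
A concrete recipe is a rescaled $m$-fold self-convolution, with $m\sim d+n+1$, of a positive bump supported in a cube of side $\sim L^{k-1}/m$; this is manifestly positive, has the right range, and its Fourier transform is a product of Fej\'er-type factors producing exactly the prescribed power decay. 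I then define
\begin{align*}
\ms{C}_{A,k}=\ms{C}^{\mathrm{AKM}}_{A,k}+\epsilon\,\ms{B}_k\quad(1\leq k\leq N),\qquad
\ms{C}_{A,N+1}=\ms{C}^{\mathrm{AKM}}_{A,N+1}-\epsilon\sum_{k=1}^N\ms{B}_k,
\end{align*}
with $\epsilon>0$ small enough. The sum telescopes to $\ms{C}_A$; the range bound is preserved because both summands at scale $k\leq N$ have range $<L^k/2$; the $A$-regularity (polynomial for $k\leq N$, real analytic for $k=N+1$) is inherited from AKM since the $\ms{B}_k$ are $A$-independent; positivity of the first $N$ levels is automatic. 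The lower bound \eqref{finalfrdlower} at level $k\leq N$ comes directly from $\epsilon\widehat{\mc{B}}_k(p)$, and the discrete bound holds with the extra factor $3$ since $\ms{B}_k$ satisfies an AKM-type $L^1$-$L^\infty$ estimate, via \eqref{l1linftyFourier}, producing exactly the same $L^{-(k-1)(d-2+|\alpha|)}$-scaling for $|\alpha|_1\leq n$. For $\ell\geq 1$ the refined derivative bound comes from \eqref{fourierbounds} with $\bar n=d-1+\tilde n$, since $A$-derivatives annihilate $\ms{B}_k$.

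\textbf{Main obstacle.} The delicate point is positivity of $\ms{C}_{A,N+1}=\ms{C}^{\mathrm{AKM}}_{A,N+1}-\epsilon\sum_k\ms{B}_k$: Theorem \ref{AKMFRD} does not furnish a direct Fourier lower bound on the residual $\widehat{\mc{C}}^{\mathrm{AKM}}_{A,N+1}(p)$ at general $p\in\bd{A}_j$. The required lower bound has to be extracted from the identity $\widehat{\mc{C}}^{\mathrm{AKM}}_{A,N+1}(p)=\widehat{\mc{C}}_A(p)-\sum_{k=1}^N\widehat{\mc{C}}^{\mathrm{AKM}}_{A,k}(p)$ together with $\widehat{\mc{C}}_A(p)\geq\Omega_0^{-1}|p|^{-2}$ from \eqref{Ahatestimate} and the summed upper bounds of AKM, and it is precisely here that the $\omega_0$-dependence of the new construction (absent in Theorem \ref{AKMFRD}) enters, through the ratio that fixes $\epsilon$. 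A secondary subtlety is the design of $\ms{B}_k$ with Fourier modes decaying \emph{exactly} as $|p|^{-(d+n+1)}$: any faster and the matching lower bound fails, any slower and the upper and discrete bounds are spoiled, which dictates the specific number $m\sim d+n+1$ of convolution factors. Once $\ms{B}_k$ and $\epsilon$ are pinned down, all remaining estimates are bookkeeping combinations of the AKM bounds with the explicit formulas for $\ms{B}_k$.
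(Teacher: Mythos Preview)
Your construction has a genuine gap at the step you yourself flag as the ``main obstacle'': positivity of $\ms{C}_{A,N+1}=\ms{C}^{\mathrm{AKM}}_{A,N+1}-\epsilon\sum_{k=1}^N\ms{B}_k$ cannot be salvaged by the argument you sketch. The identity $\widehat{\mc{C}}^{\mathrm{AKM}}_{A,N+1}(p)=\widehat{\mc{C}}_A(p)-\sum_{k\leq N}\widehat{\mc{C}}^{\mathrm{AKM}}_{A,k}(p)$ together with $\widehat{\mc{C}}_A(p)\geq\Omega^{-1}|p|^{-2}$ and the summed AKM upper bounds does \emph{not} yield a useful lower bound: the AKM upper-bound constants are generically much larger than $\Omega^{-1}$, so the difference can be negative. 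More decisively, the AKM upper bound \eqref{fourierbounds} itself shows $\lVert\widehat{\mc{C}}^{\mathrm{AKM}}_{A,N+1}(p)\rVert\leq C_{\bar n}|p|^{-2}(|p|L^{N})^{-\bar n}$, which for $p\in\bd{A}_0$ is of order $L^{-N\bar n}\to 0$ as $N\to\infty$, while $\epsilon\sum_k\widehat{\mc{B}}_k(p)$ is of order $\epsilon$ uniformly in $N$. Hence no fixed $\epsilon>0$ works for all $N$, and your decomposition at level $N+1$ fails to be positive. The same defect kills the lower bound \eqref{finalfrdlower} at $k=N+1$.

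The paper circumvents this by a two-stage construction that never dumps a correction at the top scale. First (Proposition \ref{lowerboundFRD}) it redistributes the AKM pieces \emph{across scales}, setting $\ms{D}_k=\sum_{j\leq k}\lambda_{k,j}\ms{C}^{\mathrm{AKM}}_j$ with weights $\lambda_{k,j}=L^{(k-j)(-d+1-n)}$ and $\lambda_{k,k}$ fixed by $\sum_{l\geq k}\lambda_{l,k}=1$. This already produces a decomposition with matching two-sided Fourier bounds at \emph{every} level $k$, the lower bound on $\bd{A}_j$ with $j<k$ coming from the summand $\lambda_{k,j}\ms{C}^{\mathrm{AKM}}_j$ via \eqref{keylowerbound}. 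Second, to force the $A$-derivatives to decay with exponent $\tilde n$ rather than $n$, the paper works level by level:
\[
\ms{C}_{A,k}=\ms{D}^{\tilde n}_{A,k}-\epsilon\,\ms{D}^{\tilde n}_{-\Delta,k}+\epsilon\,\ms{D}^{n}_{-\Delta,k},
\]
where $\ms{D}^{\tilde n}$, $\ms{D}^{n}$ are the Proposition \ref{lowerboundFRD} decompositions with parameters $\tilde n$ and $n$ respectively, and $\epsilon$ is the reciprocal of the quotient constant \eqref{thekeyquotientbound}. Positivity at each $k$ (including $N+1$) now follows because the two-sided bounds of Proposition \ref{lowerboundFRD} give $\widehat{\mc{D}}^{\tilde n}_{A,k}(p)\geq\epsilon\lVert\widehat{\mc{D}}^{\tilde n}_{-\Delta,k}(p)\rVert$; the $A$-independent third term supplies the $n$-decay lower bound; and only the first term carries $A$-dependence, so $A$-derivatives inherit the faster $\tilde n$-decay. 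The missing idea in your proposal is precisely this intermediate redistribution step, which manufactures a lower bound at every scale and thereby allows a scale-local subtract-and-add instead of a global compensation at $N+1$.
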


The proof of this theorem can be found in Section \ref{secconstruction}.
\begin{remark}\label{remarkfortheorem}
\begin{enumerate}
\item For the calculations it is advantageous to express the bounds mostly in
terms of the single quantity $L$. To get a better feeling for the bounds it is useful to write them in terms of $|p|$ and $L$.
The definition of $\bs{A}_j$ implies that $|p|\approx L^{-j}$ for $p\in \bs{A}_j$. Hence
up to constants that also depend on $L$ we find $\widehat{\mc{C}}_k(p)\approx
L^{2k}$ for $|p|\lesssim L^{-k}$ and 
$\widehat{\mc{C}}_k(p)\approx |p|^{-2}(L^k|p|)^{-(d-1+n)}$ otherwise.
For $\ell\geq 1$, however,
we find $\lVert \frac{\d^\ell}{\d A^\ell}\widehat{\mc{C}}_{A,k}(p)\rVert\lesssim
|p|^{-2}(L^k|p|)^{-d+1-\tilde{n}}$ for $|p|\gtrsim L^{-k}$ and $\lVert
\frac{\d^\ell}{\d A^\ell}\widehat{\mc{C}}_{A,k}(p)\rVert\lesssim L^{2k}$ otherwise.
In particular the quotient of the derivative and the kernel itself is bounded by a constant
for $|p|\lesssim L^{-k}$ and behaves as $(|p|L^k)^{n-\tilde{n}}$ for $p\gtrsim L^{-k}$, i.e.,  decays
 as fast as we like if we choose $\tilde{n}\gg n$.
\item
 It is possible to obtain similar results using the decomposition based
on averaging the Poisson kernel over cubes that appeared in
\cite{adams2013finite}. Some steps can be found in the earlier version \cite{frd2} of this
work.
To prove Theorem \ref{finalFRD}, however, technical modifications of the construction 
in \cite{adams2013finite} must be implemented in order to handle the generalisation to higher order operators and to get rid of some $L$ dependent constants in the
bound \eqref{discretebounds}. 
\end{enumerate}
\end{remark}

Since the theorem is
rather technical we briefly motivate the need for lower bounds
and the specific structure of the bounds.
As pointed out before we are interested in  bounds for $\partial_A \mathbb{E}_{\ms{C}_{k,A}} (F(\cdot+\xi))$.
 By Theorem \ref{AKMFRD} the derivatives of the covariance $\ms{C}_{k,A}$ with
respect to $A$ are controlled. 
 By the chain rule we need to bound the derivatives of expectations of Gaussian
random variables with respect to their covariance.
Let us briefly discuss this problem in a general setting. Consider a smooth map
$M:\mathbb{R}\rightarrow \mathbb{R}^{s\times
s}_{\textrm{sym},+}$ mapping to the $s\times s$ dimensional, symmetric, and
positive matrices.
We denote the Gaussian measure on $\mathbb{R}^s$ with covariance $M(t)$ by
\begin{align}\label{eq:defofmuM}
\mu_{M(t)}(\d x)=\p_{M(t)}(x)\,\d x=\frac{e^{-\frac{1}{2}\langle x,M(t)^{-1}x\rangle}}{\sqrt{\text{det}(2\pi M(t))}} \,\d x.
\end{align}
Let $F\in C_b(\mathbb{R}^s,\mathbb{R})$ be a bounded and continuous function. 
We are interested in a bound for the expression 
\begin{align}\label{eq:defofpM}
 \left|\left.\frac{\d}{\d t} \int_{\mathbb{R}^s}  F(x) \,\mu_{M(t)}(\d
x)\right|_{t=0}\right|.
\end{align}
In principle this seems easy because the Gaussian integral acts as a heat
semi-group which is infinitely smoothing. However this is only true as long as
the eigenvalues do not approach zero (think of the delta distribution which is
a (degenerate) Gaussian measure). Therefore lower bounds on the eigenvalues of
$M(t)$ control the smoothing behaviour of the semigroup. Now we discuss the
necessary bounds in a bit more detail.
An elementary calculation, with the abbreviations 
 $M=M(0)$ and $\dot{M}=\dot{M}(0)$,
shows
\begin{align}\begin{split}\label{firstderivative}
\left|\left.\frac{\d}{\d t} \int_{\mathbb{R}^s}  F(x) \,\mu_{M(t)}(\d x)\right|_{t=0}\right|
&=\left|\frac{1}{2} \int_{\mathbb{R}^s}  F(x)\left(\langle x, M^{-1}\dot{M}M^{-1}x\rangle-\tr M^{-\frac{1}{2}}\dot{M}M^{-\frac{1}{2}}\right)
\,\mu_{M}(\d x)\right|.
\end{split}
\end{align}
The trace term arises as the derivative of the determinant.
To bound this expression one needs bounds on $M^{-1}$, i.e.,  lower bounds on the spectrum of $M$ are required.
With the help of the Cauchy-Schwarz inequality it can be shown that
\begin{align}\label{eq:finalderivbound}
\left|\left.\frac{\d}{\d t} \int_{\mathbb{R}^s}  F(x) \,\mu_{M(t)}(\d x)\right|_{t=0}\right|\leq \lVert
F\rVert_{L^2(\mathbb{R}^s,\mu_{M(t)})}\lVert M^{-\frac{1}{2}}\dot{M}M^{-\frac{1}{2}}\rVert_{\mathrm{HS}}.
\end{align}
To bound the right hand side of this equation we need lower bounds on the finite range decomposition and the derivatives of the kernels with
respect to $A$ have to decay faster than the kernels itself.
Denote $\dot{\ms{C}}_{k,A}=\frac{\d}{\d t} {\ms{C}}_{k,A+t\dot{A}}$ where
$\dot{A}$ is a fixed symmetric operator.
Then the Hilbert Schmidt norm from the right hand side of
\eqref{eq:finalderivbound} corresponds for scalar fields ($m=1$)
to the expression
\begin{align}\label{eq:keyCkexpression}
  \lVert
\ms{C}_{N,A}^{-\frac{1}{2}}\dot{\ms{C}}_{N,A}\ms{C}_{N,A}^{-\frac{1}{2}}\rVert_{
\mathrm{HS}}^2=
 \sum_{p\in \widehat{T}_N\setminus \{0\}}
\left(\frac{\widehat{\mc{\dot{C}}}_{N,A}(p)}{\widehat{\mc{C}}_{N,A}(p)}
\right)^2.
\end{align}
In other words this expression shows that the derivative of the expectation is not controlled
by the change of the covariance but rather by the relative change of the covariance. Moreover it
is more difficult to bound the Hilbert-Schmidt norm for bigger number of degrees of freedoms, i.e., increasing torus size.
We observe that since we have no  lower bound for all $\p\in \widehat{T}_N$ in Theorem
\ref{AKMFRD} we cannot bound the expression \eqref{eq:keyCkexpression}. Moreover this is not an issue of
missing bounds. This can be seen easily for the decomposition constructed in
\cite{adams2013finite}.
Their decomposition has the property that
$\ms{C}_{k,tA}=t^{-1}\ms{C}_{k,A}$.
This implies $\left.\frac{\d}{\d t}\ms{C}_{k,A-tA}\right|_{t=0}=\ms{C}_{k,A}$,
i.e., 
each summand in \eqref{eq:boundsketch} is 1. Hence the entire sum diverges like
$L^{Nd}-1$.
Therefore we have to modify the construction  such that we obtain
better lower bounds for the kernels while their derivatives continue to have good decay properties. 
With the decomposition from Theorem \ref{finalFRD} the bounds
\eqref{Ahatestimate} and \eqref{keyquotientbound} the expression
\eqref{eq:finalderivbound} can be bounded uniformly in $N$ for
$\tilde{n}-n>{d}/{2}$ in the scalar case as follows
 \begin{align}\begin{split}\label{eq:boundsketch}
 \lVert \ms{C}_{N,A}^{-\frac{1}{2}}\dot{\ms{C}}_{N,A}\ms{C}_{N,A}^{-\frac{1}{2}}\rVert_{\mathrm{HS}}^2=
 \sum_{p\in \widehat{T}_N\setminus \{0\}} \left(\frac{\widehat{\mc{\dot{C}}}_{N,A}(p)}{\widehat{\mc{C}}_{N,A}(p)}\right)^2&\leq C\sum_{j=0}^N
|\bd{A}_j|L^{2(N-j)(n-\tilde{n})}\\
& \leq C \sum_{j=0}^N L^{(N-j)(2(n-\tilde{n})+d)}\leq C.
\end{split}
 \end{align}
 This means that we can bound the derivative of expectation values of $F(\cdot+\xi_{N,A})$ with respect to $A$ uniformly in $N$ for the new
decomposition.
For $k<N$ one obtains similarly the bound 
\begin{align}\label{eq:generalCkheuristic}
 \lVert
\ms{C}_{k,A}^{-\frac{1}{2}}\dot{\ms{C}}_{k,A}\ms{C}_{k,A}^{-\frac{1}{2}}\rVert_{
\mathrm{HS}}^2\leq CL^{(N-k)d}.
\end{align}
This already indicates that for $k<N$ the derivative is bounded only for certain functionals $F$.
 For details and the general case $k<N$ and $m>1$ cf. Section \ref{secloss}.

Note that 
\eqref{firstderivative} can be also bounded using the following observation
\begin{align}
\left(\langle x, M^{-1}\dot{M}M^{-1}x\rangle-\tr M^{-\frac{1}{2}}\dot{M}M^{-\frac{1}{2}}\right) \,\mu_{M}(\d x)
=\sum_{i,j=1}^s (\dot{M}_{i,j}\partial_{x_i}\partial_{x_j}\p_{M}(x))\, \d x.
\end{align}
Integration by parts then implies
\begin{align}
\left|\left.\frac{\d}{\d t} \int_{\mathbb{R}^s}  F(x) \,\mu_{M(t)}(\d x)\right|_{t=0}\right|
=\left|\frac{1}{2}\int_{\mathbb{R}^s} \sum_{i,j=1}^s \dot{M}_{i,j}\left(\partial_{x_i} \partial_{x_j}F(x)\right) \,\mu_{M}(\d x)\right|
\end{align}
The bounds on $M^{-1}$ are no longer needed. Now, however, we  bound an integral over $F$ by an 
integral over the second derivative of $F$, i.e.,  we lose two orders of
regularity. 
This loss of regularity causes substantial difficulties in the renormalisation analysis in \cite{adams2016preprint} which can be avoided by using the decomposition from
Theorem \ref{finalFRD}. 
\section{Construction of the Finite Range Decomposition}\label{secconstruction}
The lower bound for the Fourier transform of the kernel of the finite range
decompositions $\mc{C}_k$ for $|p|\lesssim
L^{-k}$ allows one to construct
a new finite range decomposition which satisfies a global bound from below.
The key idea is to use suitable linear combinations 
of the original decomposition, i.e., we use the ansatz $\ms{D}_{k}=\sum_{j=1}^k \lambda_{k,j}\ms{C}_j$. By construction of the $\ms{C}_j$
the range of $\ms{D}_k$ is not larger than    $L^k/2$. 
The discrete derivatives of $\ms{D}_{k,A}$ shall be bounded as in \eqref{discretebounds}
for all $|\alpha|\leq n$ for some integer $n>0$.
Thus we need for $|\alpha|\leq n$ the estimate $\lambda_{k,j}|\nabla^\alpha\mc{C}_{j}(x)|\leq\lambda_{k,j}L^{-(j-1)(d-2+|\alpha|)}\leq
L^{-(k-1)(d-2+|\alpha|)}$
which is satisfied if $\lambda_{k,j}\leq L^{-(k-j)(d-2+n)}$. These bounds on $\lambda_{k,j}$ are the
largest possible (later we will add 1 in the exponent so that the sum over $j$ is still uniformly bounded).
Then for $p\approx L^{-j}$ with $j\leq k$ we find, using $\widehat{\mc{C}}_j(p)\gtrsim L^{2j}\approx |p|^{-2}$, a lower bound
\begin{align}
\widehat{\mc{D}}_k(p)\geq\lambda_{k,j}\widehat{\mc{C}}_j(p)\gtrsim L^{-(k-j)(d-2+n)}|p|^{-2}\approx 
|p|^{-2}(L^kp)^{-(d-2+n)}.
\end{align}
This decays much slower in Fourier space than the decomposition from Theorem \ref{AKMFRD} and therefore it is  helpful
to bound the expression in  \eqref{eq:finalderivbound}.
The construction above yields a decomposition with good lower and upper global bounds on the Fourier modes of the finite range decomposition. The
following proposition states the precise result.
\begin{proposition}\label{lowerboundFRD}
 Let $n>0$ be an integer. Then the family $\ms{C}_A$ of operators with $A\in \mc{L}(\mc{G},\omega_0,\Omega_0)$
 has
 a finite range decomposition into operators $\ms{D}_{A,k}$ such that $\ms{C}_A=\sum_{k=0}^{N+1} \ms{D}_{A,k}$ and
  \begin{align}
\mc{D}_{A,k}(x)=M_k\,\text{ if }\, |x|_\infty\geq\frac{L^k}{2}
 \end{align}
 where $M_k\leq 0$ is a constant matrix.
Furthermore for any multiindex $\alpha$ with $|\alpha|\leq n$ and the  constants
$C(\alpha,\ell)$ 
 from Theorem \ref{AKMFRD} the discrete derivative in $x$ and the directional
derivative in $A$   satisfy
\begin{align}\label{eqdirectderivative}
\sup_{\lVert\dot{A}\rVert\leq 1} |\nabla^\alpha
D_A^\ell\mc{D}_{A,k}(x)(\dot{A},\cdots,\dot{A})|\leq
2C(\alpha,d)L^{-(k-1)(d-2+|\alpha|)}.
\end{align}
Moreover we also have a lower bound on $\ms{D}_{A,k}$ in Fourier space
\begin{align}\label{lowerfourierbounds}
 \widehat{\mc{D}}_{A,k}(p)&\geq
 \begin{cases}
  cL^{2j}L^{(k-j)(-d+1-n)}&\text{ for } p\in \bd{A}_j\text{ and }  j< k\\
 cL^{2k}&\text{ for } p\in \bd{A}_j\text{ and }
j\geq k.
\end{cases}
\end{align}

The upper bound in Fourier space reads
\begin{align}\label{upperfourierbounds}
\lVert D^\ell_A\widehat{\mc{D}}_{A,k}(p)\rVert\leq 
\begin{cases}
CL^{2(n+d)+1}L^{2j}L^{(k-j)(-d+1-n)}
\;&\text{for } p\in \bd{A}_j\text{ and }  j< k\\
CL^{2k}\;&\text{
for}\; p\in \bd{A}_j\text{ and } j\geq k.
\end{cases}
\end{align}
In particular the quotient between the lower bound \eqref{lowerfourierbounds} and the upper bound \eqref{upperfourierbounds} 
is bounded by a constant for all $p\in \widehat{T}_N$ and $A, A'\in \mc{L}(\mc{G},\omega_0,\Omega_0)$, i.e., 
\begin{align}\label{thekeyquotientbound}
KL^{2(d+n)+1}\widehat{\mc{D}}_{A,k}(p)\geq \lVert D_{A'}^\ell
\widehat{\mc{D}}_{A',k}(p)\rVert
\end{align}
for some constants $K=K(n,\ell)$. 
\end{proposition}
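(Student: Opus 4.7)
The plan is to follow the heuristic sketched just before the proposition. I set $\ms{D}_{A,k}=\sum_{j=1}^{k}\lambda_{k,j}\ms{C}_{A,j}$ with nonnegative coefficients $\lambda_{k,j}$ depending only on $k-j$, chosen so that (i) $\sum_{k=j}^{N+1}\lambda_{k,j}=1$ (which, after swapping summation order, yields $\sum_{k=1}^{N+1}\ms{D}_{A,k}=\ms{C}_A$), and (ii) they decay geometrically fast enough in $k-j$ to dominate the growth of the individual kernels with $j$. A convenient explicit choice is $\lambda_{k,j}=c_n L^{-(k-j)(d-1+n)}$ with $c_n=1-L^{-(d-1+n)}$ for $j\le k\le N$, closed off by $\lambda_{N+1,j}=1-\sum_{k=j}^{N}\lambda_{k,j}$; summing the geometric series gives $\lambda_{N+1,j}=L^{-(N+1-j)(d-1+n)}$, so the same power-law bound applies uniformly in $k$. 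Regularity of $A\mapsto\mc{D}_{A,k}(x)$ is inherited from that of $\mc{C}_{A,j}$ in Theorem \ref{AKMFRD}.

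The finite-range property is immediate because every $\ms{C}_{A,j}$ with $j\le k$ has range at most $L^j/2\le L^k/2$, so for $|x|_\infty\ge L^k/2$ the kernel $\mc{D}_{A,k}(x)$ equals the $A$-independent, nonpositive matrix $M_k=-\sum_{j=1}^{k}\lambda_{k,j}M_j$. For the discrete-derivative bound \eqref{eqdirectderivative}, I apply \eqref{discretebounds} termwise and factor out the expected rate, obtaining
\begin{align*}
\sum_{j=1}^{k}\lambda_{k,j}L^{-(j-1)(d-2+|\alpha|)}
= c_n L^{-(k-1)(d-2+|\alpha|)}\sum_{m=0}^{k-1}L^{-m(1+n-|\alpha|)}.
\end{align*}
For $|\alpha|\le n$ one has $1+n-|\alpha|\ge 1$, so the remaining geometric sum is bounded by $L/(L-1)\le 3/2$ for $L\ge 3$, which delivers the prefactor $2$ in \eqref{eqdirectderivative}.

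For the Fourier estimates I split into the regimes $j<k$ and $j\ge k$, where $p\in\bd{A}_j$. The lower bound \eqref{lowerfourierbounds} follows by dropping all but one term, the remaining ones being positive: when $j<k$, take $j'=j$ and combine $\lambda_{k,j}=c_nL^{-(k-j)(d-1+n)}$ with $\widehat{\mc{C}}_{A,j}(p)\ge cL^{2j}$ from \eqref{keylowerbound}; when $j\ge k$, take $j'=k$ and use $\widehat{\mc{C}}_{A,k}(p)\ge cL^{2k}$. For the upper bound \eqref{upperfourierbounds} and its $D_A^\ell$ counterpart, I split the sum over $j'$ at $j'=j$: in the block $j'\le j$ (low-frequency regime for $\mc{C}_{A,j'}$) I use $\|D_A^\ell\widehat{\mc{C}}_{A,j'}(p)\|\le C_\ell L^{2j'}$ from \eqref{fourierbounds}; in the block $j'>j$ (high-frequency regime) I use the decay branch $\|D_A^\ell\widehat{\mc{C}}_{A,j'}(p)\|\le C_{\ell,\bar n}|p|^{-2}(|p|L^{j'-1})^{-\bar n}$ with $\bar n\ge d+n$. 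Both resulting geometric sums close (the second one precisely because $\bar n>d-1+n$), and give a bound of the form $L^{2j}L^{-(k-j)(d-1+n)}$, with the $L$-dependent constants collecting into the claimed prefactor $L^{2(n+d)+1}$; in the regime $j\ge k$ every summand is bounded by $C_\ell L^{2k}$ and $\sum_{j'}\lambda_{k,j'}\le 1$ closes the estimate. The quotient bound \eqref{thekeyquotientbound} then follows at once by dividing the upper bound at any $A'$ by the lower bound at $A$.

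The only delicate point is the upper bound: one must verify that the decay parameter $\bar n$ in \eqref{fourierbounds} can be chosen larger than $d-1+n$ (it can, since Theorem \ref{AKMFRD} supplies $C_{\ell,\bar n}$ for every $\bar n$), and that the top slab $k=N+1$, where $\lambda_{N+1,j}$ arises as a remainder rather than directly from the geometric formula, still obeys the same power-law bound. Both issues are settled by the explicit geometric-series identity noted just after the definition of $\lambda_{k,j}$.
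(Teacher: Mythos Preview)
Your proof follows the paper's argument essentially verbatim: the same ansatz $\ms D_{A,k}=\sum_{j'=1}^k\lambda_{k,j'}\ms C_{A,j'}$ with $\lambda_{k,j'}\propto L^{-(k-j')(d-1+n)}$, the same geometric-series estimate for \eqref{eqdirectderivative}, the same single-term extraction for the lower bound, and the same split at $j'=j$ with $\bar n=d+n$ for the upper bound; the only difference is cosmetic (you normalise all weights by $c_n$ and close off at $k=N{+}1$, whereas the paper leaves the off-diagonal weights unscaled and absorbs the defect into the diagonal $\lambda_{k,k}$ alone). One small point to patch: your lower-bound recipe ``take $j'=j$'' does not apply when $p\in\bs A_0$, since the sum defining $\ms D_{A,k}$ starts at $j'=1$; there you must pick the $\ms C_{A,1}$ term and invoke the $k=1$ branch of \eqref{keylowerbound}, namely $\widehat{\mc C}_{A,1}(p)\ge c|p|^{-2}$, exactly as the paper does.
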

\begin{remark}
\begin{enumerate}
\item The finite range decomposition $\ms{D}_{A,k}$ has the property
\begin{align*}
 \ms{D}_{A,k+1}\geq L^{-d+1-n}\ms{D}_{A,k}
\end{align*}
which can be seen easily from the construction below.
One easily sees that $L^{-d+1-n}$ can be
replaced by $\eta L^{-d+2-n}$ for any $\eta<1$.
 This bound seems to be optimal
  under the condition that the discrete derivatives up to
order $n$ are bounded as in \eqref{eqdirectderivative} because the bound for
$\lVert \nabla^\alpha \mc{D}_{k,A}\rVert_\infty$ strengthens by a factor of $L^{d-2+n}$
in each step if $|\alpha|=n$. 
Lower bounds of this type might be useful for a new approach to the definition
of the  norms for the renormalisation group approach.
\item The construction is slightly more flexible when we start from the
continuous decomposition in \cite{MR3129804} (cf. Appendix A
). Then we
define 
\begin{align}
 \ms{D}_{k,A}=\int_{\mathbb{R}_+} \Phi_k(t)\cdot t W_t(\ms{A})\, \d t
\end{align}
where $\Phi_k:\mathbb{R_+}\rightarrow \mathbb{R}_+$ is a family of positive functions that are a decomposition of unity with
$\textrm{supp}(\Phi_k)\subset (0,L^k/(2R))$ and it behaves as $\Phi_k(t)\approx
(tL^{-k})^{d-1+n}$ for $t<L^k/(2R)$.
Since this is more technical we used to the simpler construction below
based on the discrete decomposition in Theorem \ref{AKMFRD}. 
\item
 Note that the only relevant difference between the result of this Proposition and Theorem \ref{finalFRD} is the decay of
$D^\ell_A\widehat{\mc{D}}_{A,k}(p)$ for $\ell>0$ which is here given by
$L^{(k-j)(-d+1-n)}$. In Theorem \ref{finalFRD} the decay is improved to $L^{(k-j)(-d+1-\tilde{n})}$ where $\tilde{n}>n$ is any integer. However,
this small change helps a lot with the sum in equation \eqref{eq:boundsketch} which cannot be bounded by using Proposition \ref{lowerboundFRD}.
\end{enumerate}
\end{remark}
\begin{proof}
 Let $\mathcal{C}_k$ be a finite range decomposition as in Theorem \ref{AKMFRD}. Define 
 \begin{align}
  \ms{D}_k=\sum_{j=1}^k\lambda_{k,j}\ms{C}_j
 \end{align}
where the coefficients $\lambda_{k,j}$ are given by $\lambda_{k,j}=L^{(k-j)(-d+2-n-1)}$ for $j<k$ and $\lambda_{k,k}$ is defined implicitly by
$\sum_{l=k}^{N+1} \lambda_{l,k}=1$. This condition implies $\sum_{k=1}^{N+1}\ms{D}_k=\ms{C}$. Since $\lambda_{l,k}$ is a geometric series in $l$
for $l>k$ we
find that $1\geq\lambda_{k,k}>1/2$. The operators $\ms{D}_k$ clearly have the correct range. The discrete derivatives can be estimated easily for
$|\alpha|\leq n$ using \eqref{discretebounds}
\begin{align}\begin{split}
 |\nabla^\alpha D_A^\ell \mc{D}_{A,k}(x)(\dot{A},\ldots,\dot{A})|&\leq \sum_{j=1}^{k}\lambda_{k,j}|\nabla^\alpha D_A^\ell
\mc{C}_{A,j}(x)(\dot{A},\ldots,\dot{A})|\\
& \leq \sum_{j=1}^{k}
C(\alpha,\ell)L^{(k-j)(-d+2-n-1)-(j-1)(d-2+|\alpha|)}\\
& \leq
C(\alpha,\ell)L^{-(k-1)(d-2+|\alpha|)}
\sum_{j=1}^{k} L^{(j-k)(n-|\alpha|+1)}\\
&\leq 2C(\alpha,\ell)L^{-(k-1)(d-2+|\alpha|)}.
\end{split}\end{align}
 In the last step we used $n-|\alpha|\geq 0$ and we estimated the geometric series by 2. 
It remains to prove the  bounds in Fourier space.
For $j\geq k$ and $p\in \bd{A}_j$ we use \eqref{keylowerbound} and
$\lambda_{k,k}\geq \frac{1}{2}$
\begin{align}
 \widehat{\mc{D}}_k(p)\geq \frac{1}{2}\widehat{\mc{C}}_k(p)\geq
\frac{c}{2}L^{2k}.
\end{align}
For $1\leq j< k$ we employ again \eqref{keylowerbound} on the  $\ms{C}_{j}$
summand of $\ms{D}_k$
\begin{align}\begin{split}
 \widehat{\mc{D}}_k(p)&\geq \lambda_{k,j}\widehat{\mc{C}}_{j}(p)\geq
cL^{(k-j)(-d+2-n-1)}L^{2j}
 \geq cL^{(k-j)(-d+1-n)}L^{2j}.\end{split}
\end{align}
Finally, for $j= 0$ equation \eqref{keylowerbound} applied to the $\ms{C}_1$ term yields 
\begin{align}\begin{split}
 \widehat{\mc{D}}_k(p)&\geq \lambda_{k,1}\hat{\mc{C}}_{1}(p)\geq
L^{(k-1)(-d+1-n)}c|p|^{-2}
 \geq c L^{k(-d+1-n)} L^{d+n-1}L^{-2}\geq  c L^{(k-j)(-d+1-n)}L^{2j}.\\
\end{split}\end{align}
The proof of the upper bound \eqref{upperfourierbounds}  is straightforward but technical.
For $j\geq k$ and $p\in \bs{A}_j$ the bound is immediate from the second estimate of
\eqref{fourierbounds}
 because 
  \begin{align}
\lVert\widehat{\mc{D}}_k(p)\rVert \leq \sum_{k'=1}^{k}\lambda_{k,k'}\lVert \widehat{\mc{C}}_{k'}(p)\rVert
\leq  C\sum_{k'=1}^{k}L^{2k'}\lambda_{k,k'}\leq
2CL^{2k}.
\end{align}
On the other hand for $j< k$ and $p\in \bs{A}_j$ we find with \eqref{fourierbounds}
for any $\bar{n}>0$
\begin{align}\label{frd2est1}\begin{split}
&\lVert \widehat{\mc{D}}_k(p)\rVert\leq \sum_{k'=1}^{k}\lambda_{k,k'}\lVert \widehat{\mc{C}}_{k'}(p)\rVert\leq\\
&\leq \sum_{k'=1}^{j}
L^{(k-k')(-d+1-n)}CL^{2k'}
+\sum_{k'=j+1}^{k}L^{(k-k')(-d+1-n)}C_{\bar{n}}|p|^{-2} (|p|L^{
k'-1})^{-\bar{n}}.
\end{split}\end{align}
The first summand is a geometric sum bounded by twice the largest term
\begin{align}\label{frdest2}
\sum_{k'=1}^{j}
L^{(k-k')(-d+1-n)}CL^{2k'}\leq 2C L^{(k-j)(-d+1-n)}L^{2j}.
\end{align}
The second summand in \eqref{frd2est1} is also a geometric series and for
$\bar{n}=d+n $ it can be bounded similarly
\begin{align}\begin{split}\label{frdest3}
\sum_{k'=j+1}^{k}L^{(k-k')(-d+1-n)}C |p|^{-2}(|p|L^{
k'-1})^{-\bar{n}}
&\leq
CL^{2(j+1)} 
\sum_{k'=j+1}^{k}L^{(k-k')(-d+1-n)}(L^{
k'-j-2})^{-(d+n)}\\
&\leq 
C L^{2(d+n)+1}L^{2j}L^{(k-j)(-d+1-n)}.
\end{split}\end{align}
Now the estimates \eqref{frdest2} and \eqref{frdest3} plugged in
\eqref{frd2est1} imply \eqref{upperfourierbounds} for $\ell=0$.
This ends the proof for $\ell=0$. For $\ell>0$  only the constants in the upper
bound change.
\end{proof}
So far we have constructed a finite range decomposition with $n$ controlled discrete derivatives and matching lower and upper bounds
on the Fourier coefficients. Finally we want to prove Theorem \ref{finalFRD}
where we stated the existence of a finite range decomposition for  $\ms{C}_A$ 
where the derivatives with respect to $A$ of the kernel decay better in Fourier
space than the kernel itself. The key idea is to start with a decomposition as
in Proposition \ref{lowerboundFRD} 
with many controlled derivatives, then subtract something constant such that the
decomposition remains positive. Finally we add the subtracted part in a way that
we get strong lower bounds. 
Then the derivatives with respect to $A$ only hit the fast decaying first term.
The main problem
is to ensure that the operators remain positive.
In the scalar case the fact that all operators are simultaneously diagonalised by the Fourier transform simplifies the 
analysis slightly and this would allow one to obtain slightly stronger results than the one stated in Theorem \ref{finalFRD}.

For the construction we have to fix an operator where for simplicity we choose
$-\Delta$, where $\Delta=-\nabla^\ast \nabla$ denotes the lattice Laplacian.
The linear map $A:\mc{G}\rightarrow\mc{G}$ corresponding
to the discrete Laplacian acting on vectors in $\mathbb{R}^m$ has the matrix elements
$A_{\alpha\alpha}=\mathrm{id}_{m\times m}$ for $|\alpha|=1$ and 0 otherwise. 
With a slight abuse of notation we denote finite range decompositions of the
Laplacian by $\ms{D}_{\Delta,k}$.
\begin{proof}[Proof of Theorem \ref{finalFRD}]
 Let $\ms{D}_{A,k}^{\tilde{n}}$ and $\ms{D}_{-\Delta,k}^{\tilde{n}}$ be finite
range decompositions as constructed in Proposition
 \ref{lowerboundFRD} with $\tilde{n}$ controlled derivatives.
Let $\ms{D}_{-\Delta,k}^n$ be a finite range decomposition of
$\ms{C}_{-\Delta}$ as constructed in Proposition \ref{lowerboundFRD}
with derivatives up to order $n$ bounded.
Then we define
\begin{align}
 \ms{C}_{A,k}=\ms{D}_{A,k}^{\tilde{n}}-
 \frac{L^{-2(d+\tilde{n})-1}}{K(\tilde{n},\ell=0)}
 \ms{D}^{\tilde{n}}_{-\Delta,k}
 +\frac{L^{-2(d+\tilde{n})-1}}{K(\tilde{n},\ell=0)}
 \ms{D}_{-\Delta, k}^n
\end{align}
where $K\geq 1$ is the constant from the inequality \eqref{thekeyquotientbound}.
Clearly we have
\begin{align}
 \ms{C}_{A}=\sum_{k=1}^{N+1} \ms{C}_{A,k},
\end{align}
this decomposition has the correct finite range and it is translation invariant.
Bounds on the discrete derivatives already hold for each term separately by \eqref{eqdirectderivative}. The same is true
for the upper bounds in Fourier space. Moreover the stronger bound for the derivatives 
with respect to $A$ follows from the fact that only the first term depends on $A$ and
the bounds given in equation \eqref{upperfourierbounds} in Proposition \ref{lowerboundFRD}.
It remains to prove the lower bounds in Fourier space which also imply positivity of the decomposition.
The third term satisfies the lower bound so it remains to proof that the first two terms together are positive.
But this is a consequence of  \eqref{thekeyquotientbound} which in particular gives
\begin{align}
\widehat{\mc{D}}_{A,k}^{\tilde{n}}(p)\geq 
\frac{L^{-2(d+\tilde{n})+1}}{K(\tilde{n},\ell=0)}\lVert
\widehat{\mc{D}}_{-\Delta,k}^{\tilde{n}}(p)\rVert.
\end{align}
\end{proof}

\section{Smoothness of the Renormalisation Map}\label{secloss}
The goal of this section is to prove the following proposition stating the smoothness of the
renormalisation map based on the finite range decomposition from Theorem \ref{finalFRD}.
This is similar to the motivation given in Section \ref{secconstruction}. 
We first discuss the simplest smoothness statement as an illustration without technical problems. Then, in Proposition
\ref{prop:finalsmoothness}, we prove the more general smoothness statement that allows to
avoid the loss of regularity in \cite{adams2016preprint}.
Both propositions rely on an interesting localisation property of finite range
decompositions discussed in Lemma \ref{lemmaloc}.
\begin{proposition}\label{proprenormmap}
Let $B\subset T_N$ be a cube of side length $L^k$. 
Let $F:\mc{X}_N\rightarrow \mathbb{R}$ be a bounded functional that is measurable with respect
to the $\sigma$-algebra generated by $\{ \p(x)|\, x\in B\}$, i.e., $F$ depends only on the restriction $\left.\p\right|_B$. 
Then the following bound holds
\begin{align}\label{eq:special}
\left|\partial_A \int_{\mc{X}_N} F(\p+\psi)\,\mu_{\ms{C}_{A,k+1}}(\d \p)\right|\leq C\lVert F\rVert_{L^2(\mc{X}_N,\mu_{\ms{C}_{A,k+1}})}
\end{align} 
where $\ms{C}_{A,k+1}$ is a finite range decomposition as in Theorem \ref{finalFRD}
with $\tilde{n}-n>d/2$ and $C$ is a constant that does not depend on $N$ or $k$
but in contrast to the previous sections it does depend on $L$.
\end{proposition}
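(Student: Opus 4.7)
The plan is to reduce the Gaussian integral on $\mc X_N$ to one on a torus of size comparable to $L^{k+1}$ via the localisation Lemma \ref{lemmaloc}, and then apply the Gaussian differentiation identity \eqref{eq:finalderivbound} together with the Fourier quotient bound \eqref{keyquotientbound} in the spirit of \eqref{eq:boundsketch}. The hypothesis $\tilde n - n > d/2$ will make the resulting sum over Fourier modes uniformly bounded in $N$ and $k$.

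First I would invoke Lemma \ref{lemmaloc}. Since $\ms C_{A,k+1}$ has range at most $L^{k+1}/2$ and $F$ is $\sigma(\p|_B)$-measurable on a cube $B$ of side length $L^k$, the integral $\int F(\p+\psi)\mu_{\ms C_{A,k+1}}(\d\p)$ equals an integral of (a natural extension of) $F$ against a translation invariant Gaussian measure on a smaller torus $\tilde T$ of side length a fixed multiple of $L^{k+1}$. The induced covariance $\tilde{\ms C}_{A,k+1}$ inherits the Fourier-space bounds \eqref{finalfrdlower}--\eqref{keyquotientbound} of Theorem \ref{finalFRD}, and the marginalisation preserves the $L^2$ norm of $F$. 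Applying \eqref{eq:finalderivbound} to the reduced problem yields, for any $\dot A$ with $\lVert \dot A\rVert \leq 1$,
\begin{equation*}
\left| D_A\!\!\int_{\mc X_N} F(\p+\psi)\,\mu_{\ms C_{A,k+1}}(\d\p)(\dot A)\right| \leq \lVert F\rVert_{L^2(\mu_{\ms C_{A,k+1}})} \cdot \lVert \tilde{\ms C}_{A,k+1}^{-1/2}\, D_A \tilde{\ms C}_{A,k+1}(\dot A)\, \tilde{\ms C}_{A,k+1}^{-1/2}\rVert_{\mathrm{HS}}.
\end{equation*}

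Next, since $\tilde{\ms C}_{A,k+1}$ is translation invariant on $\tilde T$, the Hilbert--Schmidt norm diagonalises in Fourier and is controlled by the $\ell^2$ sum over Fourier modes of the ratio $\lVert\widehat{D_A\mc C_{A,k+1}}\rVert/\widehat{\mc C_{A,k+1}}$, exactly as in \eqref{eq:keyCkexpression}. Partitioning this sum along the dyadic annuli \eqref{eq:defofannulus} on $\tilde T$, using the size bound \eqref{eq:sizeofAj}, and invoking \eqref{keyquotientbound} with $\ell = 1$ yields a geometric series of the form
\begin{equation*}
C\sum_{j=0}^{k+1} L^{(k+1-j)d}\,L^{2(k+1-j)(n-\tilde n)} = C\sum_{j=0}^{k+1} L^{(k+1-j)(d+2(n-\tilde n))},
\end{equation*}
which converges uniformly in $N$ and $k$ precisely when $\tilde n - n > d/2$, giving the desired bound with a constant depending on $L$, $d$, $m$, $n$, $\tilde n$ but not on $N$ or $k$.

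The main obstacle is the localisation in Step 1. Naively, the marginal of a translation invariant Gaussian on $T_N$ to the cube $B$ fails to be translation invariant, so Fourier diagonalisation cannot be applied on $B$ directly. The role of Lemma \ref{lemmaloc} is to identify this marginal (in distribution) with the marginal of a translation invariant Gaussian on the smaller torus $\tilde T$, exploiting the finite range property of $\ms C_{A,k+1}$ and the fact that the range $L^{k+1}/2$ is comparable to the side length $L^k$ of $B$. Making this identification precise and verifying that the Fourier estimates of Theorem \ref{finalFRD} descend to $\tilde T$ is the technical heart of the proof; once this reduction is in hand, the remainder is a direct substitution into the motivating computation \eqref{firstderivative}--\eqref{eq:boundsketch}.
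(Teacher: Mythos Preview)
Your proposal is correct and follows essentially the same route as the paper's main proof: invoke Lemma~\ref{lemmaloc} to reduce to a Gaussian on the torus $T_{k+1}$ (where $\overline{N}=k+1$ since $\mathrm{diam}_\infty(B)=L^k$ gives $L^{k+1}>2D\geq L^k$), apply the Gaussian differentiation bound \eqref{eq:finalderivbound}, and control the Hilbert--Schmidt norm in Fourier via \eqref{keyquotientbound} and the dyadic sum \eqref{eq:boundsketch}. The paper additionally notes that the $p=0$ Fourier mode of $\ms D_{k+1,k+1}$ is $A$-independent (it only involves the constant matrix $M_{k+1}$), so it drops out of the derivative; you implicitly rely on this but it is worth making explicit.
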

\begin{remark}
 The condition that $F$ depends on the values of the field in a cube of
side length $L^k$ appears naturally in the renormalisation group analysis, cf., 
e.g.,  \cite{MR2523458}.
\end{remark}
As stated in Proposition \ref{proprenormmap} we show a bound for derivatives of the renormalisation map which does not depend on $N$.
In principle this could be difficult since the dimension of the space $\mc{X}_N$ over which we integrate increases with $N$.
We have seen in equation \eqref{eq:boundsketch} in Section \ref{secsetting} that the number of terms appearing in the 
 bounds for the
derivatives with respect to the covariance is proportional to the dimension of
the space $\mc{X}_N$ and therefore that the naive estimate is only sufficient to bound the derivative for $k=N$. For $k<N$ \eqref{eq:generalCkheuristic}
suggests that there is no uniform in $N$ bound for general functionals $F$. This means we  have to exploit
the special structure of the functionals.

The following heuristics suggests that the specific structure of the functionals is indeed sufficient to bound the 
derivative uniformly in $N$.
We integrate in the $k$-th integration step with respect to a measure of range $L^k$
which satisfies bounds uniformly in $N$ and the functionals
$F_k(\p)$ that appear are local in the sense that they only depend on the values of $\p$ on $B$.
Hence in some sense the size of the torus should not be seen by this
integration. 
 This idea can be made rigorous in the sense that locally the distribution on a block $B$ of a field $\xi$ with 
distribution $\mu_{\ms{C}_{A,k}}$ is the same as the distribution of a field $\tilde{\xi}$ defined on a torus
of size comparable to $B$.

A bit more quantitatively we motivate the $\mc{O}(1)$ bound as follows. From
\eqref{eq:generalCkheuristic} we find the bound 
\begin{align}\label{eq:heuristicbd}
 \left|\partial_A \int_{\mc{X}_N} F(\p+\psi)\,\mu_{\ms{C}_{A,k+1}}(\d
\p)\right|\leq C\sqrt{L^{(N-k)d}}\lVert F\rVert_{L^2(\mc{X}_N,\mu_{A,k+1})}
\end{align}
for general functionals $F$.
If $F$ depends only on the values of $\p$ in a block of side-length $L^k$ it
depends only on the fraction $L^{(k-N)d}$ of the degrees of freedom. 
We expect that each degree of freedom contributes equally which suggests the
bound stated in \eqref{eq:special}.

Let us sketch a proof of Proposition \ref{proprenormmap} that makes the previous consideration precise. Below, we also give a longer proof based on Lemma \ref{lemmaloc} because the second proof can be used to establish Theorem \ref{prop:finalsmoothness}.
Consider the set $T$ of translation operators given by 
\begin{align}
T=\{\tau_a: a=(a_1,\ldots ,a_n)\in (3L^{k+1}\mathbb{Z})^d,\,
 0\leq a_i\leq L^N-3L^{k+1}\}.
 \end{align} Then there is a constant such that $|T|\geq cL^{(N-k-1)d}$.
 Note that for the translation $\tau_a$ the random variable 
 $F(\tau_a\p)$ only depends on the values of $\p$ in $B+a$. The definition of $T$ implies
 that for $\tau_{a_1}, \tau_{a_2}\in T$ with $\tau_{a_1}\neq \tau_{a_2}$ the sets $B+a_1$ and $B+a_2$
 have distance at least $L^{k+1}$ which implies that the random variables
 $F(\tau_{a_1}\p)$ and $F(\tau_{a_2}\p)$ are independent. 
Then we can estimate using translation invariance of $\mu_{A,k+1}$, independence of $\tau_a F$, and \eqref{eq:heuristicbd}
\begin{align}
\begin{split}
\left|\partial_A \int_{\mc{X}_N} F(\p)\,\mu_{A,k+1}(\d \p)\right|
&= \frac{1}{|T|}
\left|\partial_A \sum_{\tau\in T} \int_{\mc{X}_N} F(\tau\p)\,\mu_{A,k+1}(\d \p)\right|
\\
&\leq 
\frac{C \sqrt{L^{(N-k-1)d}}}{|T|} \left\lVert \sum_{\tau\in T} F(\tau\p)\right\rVert_{L^2(\mc{X}_N,\mu_{A,k+1}} 
\\
&\leq \frac{C \sqrt{L^{(N-k-1)d}}}{|T|}  \left(\sum_{\tau\in T} \int_{\mc{X}_N} F^2(\tau\p)\,\mu_{A,k+1}\right)^{\frac12}
\\ 
&\leq 
C\sqrt{\frac{L^{(N-k-1)d}}{|T|}}   \lVert F\rVert_{L^2(\mc{X}_N,\mu_{A,k+1})}
\\
&\leq 
C   \lVert F\rVert_{L^2(\mc{X}_N,\mu_{A,k+1})}.
\end{split}
\end{align}

We introduce some notation necessary for the next lemma. 
Let $k\leq\overline{N}\leq N$ be  positive integers.
We denote by $\pi=\pi_{N,\overline{N}}:T_N\rightarrow T_{\overline{N}}$ the
projection and group homomorphism of discrete tori.
Recall that $\mc{V}_N$ was defined in \eqref{eq:defofVN} and denotes the set of
fields on $T_N$.
Let $\tau:\mc{V}_{\overline{N}}\rightarrow \mc{V}_N$ with $\tau=\pi^{\ast}$ be
the pull-back of fields, i.e.,
for $\p\in\mc{V}_{\overline{N}}$ we define $(\tau\p)(x)=\p(\pi x)$. Clearly
$\tau\p$ has average zero if $\p$ has average zero hence $\tau$ also maps the
subspace $\mc{X}_{\overline{N}}$ to $\mc{X}_N$. In other words, identifying
functions
on $T_N$ with periodic functions on $\mathbb{Z}^d$, the function $\tau\p$ is
just the  $(L^{\overline{N}}\mathbb{Z})^d$ periodic function $\p$ understood as
a $(L^{N}\mathbb{Z})^d$ periodic function.
With this notation we can state the following lemma.
Here $\ms{C}=\sum_{k=0}^{N+1}\ms{C}_k$ 
denotes a finite range decomposition such that $\ms{C}_k$ is a
non-negative translation invariant, positive operator on $\ms{V}_N$ (see Remark \ref{rem:XvsV} below) with kernels $\mc{C}_k$
satisfying $\mc{C}_k(x)=-M\leq 0$ for $\d_\infty(x,0)\geq {L^k}/{2}$ where $M$
is a positive semi-definite symmetric matrix.
\begin{lemma}\label{lemmaloc}
Let $X\subset T_N$  and $D=\mathrm{diam}(X)=\sup_{x,y\in X} d_\infty(x,y)$.
Choose
$\overline{N}\in\mathbb{N}$ such that $L^{\overline{N}}> 2D\geq
L^{\overline{N}-1}$ and assume $k\leq \overline{N}\leq N$.
Define a  Gaussian measure $\nu_{k,\overline{N}}$ on
$\mc{V}_{\overline{N}}$ by its covariance operator $\ms{D}_{k,\overline{N}}$
given by the kernel
$\mc{D}_{k,\overline{N}}:T_{\overline{N}}\rightarrow \mathbb{R}^{m\times
m}_{\mathrm{sym}}$
\begin{align}
\mc{D}_{k,\overline{N}}(x)=(L^{(N-\overline{N})d}-1)M+\frac{1}{L^{\overline{N}d}}\sum_{p\in
\widehat{T}_{\overline{N}}} e^{ipx}
\widehat{\mc{C}}_k(p)
\end{align}
where $\widehat{\mc{C}}_k(p)$ are the Fourier coefficients of the kernel
$\mc{C}_k$ of the covariance operator $\ms{C}_k$. They are well defined because
$\widehat{T}_{\overline{N}}\subset \widehat{T}_N$.
Let $F_X:\mc{V}_N\rightarrow \mathbb{R}$ 
be a measurable functional that only depends on $\{\p(x)|\, x\in X\}$, i.e., $F$ is measurable with respect to the $\sigma$-algebra generated by
$\{\p(x)|\, x\in X\}$. Then the following identity holds
\begin{align}\label{eq:lemmaloc}
\int_{\mc{V}_N} F_X(\xi)\,\mu_k(\d \xi)=\int_{\mc{V}_{\overline{N}}}
F_X(\tau\psi)\,\nu_{k,\overline{N}}(\d\psi).
\end{align}
\end{lemma}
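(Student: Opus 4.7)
The strategy is to recognize that both sides of \eqref{eq:lemmaloc} are integrals of $F_X$ against the law of the restricted random vector $(\xi(x))_{x\in X}$, under the respective Gaussian measures on $\mc{V}_N$ (directly) and $\mc{V}_{\overline{N}}$ (via the pull-back $\tau$). Since centered Gaussians are determined by their covariance, it suffices to prove the pointwise identity
\begin{align*}
\mc{D}_{k,\overline{N}}(\pi(x-y)) = \mc{C}_k(x-y)\qquad \text{for all } x,y \in X.
\end{align*}
As a preliminary step one also has to check that $\ms{D}_{k,\overline{N}}$ is positive semi-definite so that $\nu_{k,\overline{N}}$ is a well-defined Gaussian measure; this is straightforward: a direct calculation of the $T_{\overline{N}}$-Fourier coefficients of $\mc{D}_{k,\overline{N}}$ yields $\widehat{\mc{C}}_k(q)\geq 0$ for $q\neq 0$ and a non-negative multiple of $M$ at $q=0$.

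For the main identity, the key idea is to exploit the finite range property by writing $\mc{C}_k(x) = \tilde{\mc{C}}_k(x)-M$, where the function $\tilde{\mc{C}}_k \vcentcolon= \mc{C}_k + M$ is supported in the box $\{|x|_\infty < L^k/2\}$. Since $k\leq \overline{N}$, this support embeds into $T_{\overline{N}}$, and $\tilde{\mc{C}}_k$ may be regarded equally as a function on either torus. Taking $T_N$-Fourier transforms gives $\widehat{\mc{C}}_k(p) = \widehat{\tilde{\mc{C}}_k}(p) - M L^{Nd}\delta_{p,0}$, and the crucial observation is that for $p\in \widehat{T}_{\overline{N}}$ the number $\widehat{\tilde{\mc{C}}_k}(p)$ computed on $T_N$ coincides with the $T_{\overline{N}}$-Fourier coefficient of the same kernel viewed on the smaller torus, because the support is too small for any wrapping to occur. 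Substituting this into the definition of $\mc{D}_{k,\overline{N}}$ and applying Fourier inversion on $T_{\overline{N}}$ should then yield $\mc{D}_{k,\overline{N}}(z)=\tilde{\mc{C}}_k(z)-M$ for every $z\in T_{\overline{N}}$, which equals $\mc{C}_k(z)$ inside the support box and equals $-M=\mc{C}_k(z)$ outside of it by the finite range assumption.

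To close the argument, for $x,y\in X$ one has $|x-y|_\infty \leq D < L^{\overline{N}}/2$, so $\pi(x-y)$ can be identified with $x-y$ (the natural representative lies in the fundamental domain of $T_{\overline{N}}$), and the pointwise identity above applies. The main obstacle I expect is the careful bookkeeping of Fourier conventions across two tori of different sizes, in particular the fact that $0\in \widehat{T}_{\overline{N}}$ while $\widehat{\mc{C}}_k(0)=0$ on $T_N$ (since $\mc{C}_k\in\mc{M}_N$). The additive constant $(L^{(N-\overline{N})d}-1)M$ appearing in the definition of $\mc{D}_{k,\overline{N}}$ is precisely tuned to absorb this zero-mode mismatch, which arises because $\mc{C}_k$ approaches the constant $-M$ rather than $0$ at large distances. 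Once this cancellation is verified, the lemma follows from the standard principle that two centered Gaussians with the same covariance yield the same integrals for any measurable functional of the underlying variables.
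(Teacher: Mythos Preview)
Your proposal is correct. The underlying computation --- that $\mc{D}_{k,\overline{N}}$ agrees with $\mc{C}_k$ on the fundamental domain of $T_{\overline{N}}$, with the additive constant $(L^{(N-\overline{N})d}-1)M$ exactly cancelling the zero-mode contribution --- is the same as in the paper, but you organise it more directly. The paper introduces an intermediate Gaussian measure $\mu_{k,\overline{N}}$ on $\mc{V}_N$ (with kernel obtained by restricting to the Fourier modes in $\widehat{T}_{\overline{N}}$ and adding $\lambda M$), then proceeds in two steps: first it shows $\mc{C}_{k,\overline{N}}(x)=\mc{C}_k(x)$ locally via a Poisson-summation argument $\mc{C}_{k,\overline{N}}(x)=\sum_{\pi(y)=\pi(x)}\mc{C}_k(y)+\lambda M$, and second it proves $\mu_{k,\overline{N}}=\tau_\ast\nu_{k,\overline{N}}$ by comparing characteristic functions. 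Your compact-support splitting $\mc{C}_k=\tilde{\mc{C}}_k-M$ is an equivalent way to carry out the first step, and by comparing covariances of $(\tau\psi)|_X$ directly you bypass the intermediate measure and the characteristic-function computation of the second step entirely. The paper's route establishes the slightly stronger statement that $\mu_{k,\overline{N}}=\tau_\ast\nu_{k,\overline{N}}$ as measures on all of $\mc{V}_N$, which is not needed for the lemma as stated but clarifies the structure; your route is shorter and perfectly adequate for the conclusion.
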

\begin{remark}\label{rem:XvsV}
Note that the measure $\mu_k$ appearing on the left hand side of equation \eqref{eq:lemmaloc} was defined  on $\mc{X}_N \subset \mc{V}_N$. In Remark \ref{remark:VnvsXn}
we discussed that $\mu_k$ agrees with a degenerate Gaussian measure on $\mc{V}_N$ which implies that the left hand side is well defined.
\end{remark}
\begin{proof} 
The proof of the lemma is divided in two steps. 

Step 1: First we show that we can find another Gaussian measure on $\mc{V}_N$
such that
the local covariance structure is the same but all Fourier coefficients of the
kernel except the ones that satisfy $p\in\widehat{T}_{\overline{N}}\subset
\widehat{T}_N$ vanish (see \eqref{eq:covident} below).
Define the kernel of an operator $\mc{C}_{k,\overline{N}}:T_N\rightarrow
\mathbb{R}^{m\times m}_{\mathrm{sym}}$ by
\begin{align}
\mc{C}_{k,\overline{N}}(x)=\frac{1}{L^{d\overline{N}}}\sum_{p\in
\widehat{T}_{\overline{N}}} e^{ipx}\widehat{\mc{C}}_k(p)+\lambda M
\end{align}
where here and in the following we write $\lambda=L^{(N-\overline{N})d}-1$.
Clearly the Fourier modes of $\mc{C}_{k,\overline{N}}$ are given by
$\widehat{\mc{C}}_{k,\overline{N}}(p)=L^{(N-\overline{N})d}\widehat{\mc{C}}_k(p)$
for $p\in \widehat{T}_{\overline{N}}\setminus\{0\}$,
$\widehat{\mc{C}}_{k,\overline{N}}(0)=L^{(N-\overline{N})d}\widehat{\mc{C}}_k(p)+L^{Nd}\lambda M$, and 0 otherwise.
By assumption $\ms{C}_k$ is a non-negative operator on $\mc{V}_N$, i.e.,  all
Fourier coefficients 
are non-negative and $M$ is positive by
assumption, hence $\ms{C}_{k,\overline{N}}$ also is a non-negative operator.
Therefore it defines a (highly degenerate) Gaussian measure
$\mu_{k,\overline{N}}$ on $\mc{X}_N$ with covariance operator
$\ms{C}_{k,\overline{N}}$.

We are interested in this measure because of the following remarkable property:
For $x\in T_N$ with $\d_\infty(x,0)\leq {(L^{\overline{N}}-1)}/{2}$ we have
$\mc{C}_{k,\overline{N}}(x)=\mc{C}_k(x)$, i.e.,  locally the measures $\mu_k$
and $\mu_{k,\overline{N}}$ have the same covariance structure 
Let us prove this property. First we observe	that for
$p\in \widehat{T}_{\overline{N}}$ the exponential $f_p$ is a well defined
function on $T_{\overline{N}}$ and
with slight abuse of notation (identifying the exponentials on $T_N$ and
$T_{\overline{N}}$)
it satisfies $f_p(x)=f_p(\pi x)$ for $x\in T_N$. Hence we find for $x,y\in T_N$
\begin{align}
\frac{1}{L^{\overline{N}d}}\sum_{p\in \widehat{T}_{\overline{N}}} e^{ip(x-y)}=
\frac{1}{L^{\overline{N}d}}\sum_{p\in \widehat{T}_{\overline{N}}} e^{ip\cdot
\pi(x-y)}=\begin{cases}
1 \; \text{if } \; \pi(x)=\pi(y)\\
0\; \text{else.}
\end{cases}
\end{align}
Using this, we calculate
\begin{align}
\begin{split}
\mc{C}_{k,\overline{N}}(x)&=\frac{1}{L^{\overline{N}d}}\sum_{p\in
\widehat{T}_{\overline{N}}} e^{ipx}\widehat{\mc{C}}_k(p)+\lambda M\\
&=\frac{1}{L^{\overline{N}d}}\sum_{p\in \widehat{T}_{\overline{N}}}
e^{ipx}\left(\sum_{y\in T_N}e^{-ipy}\mc{C}_k(y)\right)+\lambda M\\
&=\sum_{y\in T_N}\mc{C}_k(y)\left(\frac{1}{L^{\overline{N}d}}\sum_{p\in
\widehat{T}_{\overline{N}}} e^{ip(x-y)}\right)+\lambda M\\
&=\sum_{\substack{y\in T_N \\ \pi(y)=\pi(x)}}\mc{C}_k(y)+\lambda M.
\end{split}
\end{align}
In the first step we used the definition of the kernel, in the second
we used the definition of the Fourier transform and
the third step interchanged the order of summation.
Now for a given point $x\in T_N$ there is exactly one $y_0\in T_N$ such that 
$\pi(x)=\pi(y_0)$ and $\d_\infty(y_0,0)\leq \frac{L^{\overline{N}}-1}{2}$. If
$\d_\infty(x,0)\leq \frac{L^{\overline{N}}-1}{2}$ we have $x=y_0$.
Moreover, for $\d_\infty(y,0)> \frac{L^{\overline{N}}-1}{2}\geq \frac{L^k-1}{2}$
we have
$\mc{C}_k(y)=-M$ by assumption.
Hence we have 
\begin{align}\label{eq:covident}
 \mc{C}_{k,\overline{N}}(x)=\mc{C}_k(x)-(L^{(N-\bar{N})d}-1)M+\lambda
M=\mc{C}_k(x)
\end{align}
for $\d_\infty(x,0)\leq
\frac{L^{\overline{N}}-1}{2}$
 as claimed.
Actually we have even shown that $\mc{C}_{k,\overline{N}}$ is the
$(L^{\overline{N}}\mathbb{Z})^d$ periodic extension
of $\left.\mc{C}_k\right|_{\left[-\frac{L^{\overline{N}}-1}{2},\ldots,\frac{L^{\overline{N}}-1}{2}\right]^d}$.

Next we claim that if $\xi$ is distributed according to $\mu_k$ and $\p$ is distributed according to 
$\mu_{k,\overline{N}}$ then
$\left.\xi\right|_{X}\stackrel{\text{Law}}{=}\left.\p\right|_{X}$.
First we note that since the distribution of $\xi$ and $\p$ is Gaussian with mean zero the same holds
for the restrictions $\left.\xi\right|_{X}$ and $\left.\p\right|_{X}$. Then it is
enough to prove that all covariances agree because they determine the law.
By the assumption $x,y\in X\Rightarrow \d_\infty(x,y)\leq D \leq
\frac{L^{\overline{N}}-1}{2}$. Hence
$\mc{C}_k(x-y)=\mc{C}_{k,\overline{N}}(x-y)$ and therefore
\begin{align}
\mathbb{E}\left(\xi^i(x)\xi^j(y)\right)=\mc{C}^{ij}_k(x-y)=\mc{C}^{ij}_{k,
\overline{N}}(x-y)=\mathbb{E}\left(\p^i(x)\p^j(y)\right).
\end{align}
By assumption there exists a functional $\tilde{F}_X:(\mathbb{R}^m)^X\rightarrow \mathbb{R}$ such that
$F_X(\xi)=\tilde{F}_X(\left.\xi\right|_{X})$. Hence we find
\begin{align}\label{endstep1}\begin{split}
\int_{\mc{V}_N} F_X(\xi)\,\mu_k(\d \xi)=
\int_{\mc{V}_N} \tilde{F}_X(\left.\xi\right|_{X})\,\mu_k(\d \xi)&=
\int_{\mc{V}_N} \tilde{F}_X(\left.\p\right|_{X})\,\mu_{k,\overline{N}}(\d \p)\\
&=\int_{\mc{V}_N} F_X(\p)\,\mu_{k,\overline{N}}(\d \p).
\end{split}
\end{align}

Step 2: In the second step we show that the measure $\mu_{k,\overline{N}}$ is the push-forward of 
$\nu_{k,\overline{N}}$ along
$\tau$, i.e.,  $\mu_{k,\overline{N}}=\tau_{\ast}\nu_{k,\overline{N}}$.
 The
measure
$\nu_{k,\overline{N}}$ was defined in the statement of the lemma by the kernel
$\mc{D}_{k,\overline{N}}:T_{\overline{N}}\rightarrow \mathbb{R}$
\begin{align}
\mc{D}_{k,\overline{N}}(x)=\frac{1}{L^{\overline{N}d}}\sum_{p\in
\widehat{T}_{\overline{N}}} e^{ipx} \widehat{\mc{C}}_k(p)+\lambda M.
\end{align}
From this equation we can 
similarly to Step 1 extract the Fourier decomposition of this operator
and see that this expression defines a non-negative operator and therefore the covariance of a Gaussian measure.
Note that for $x\in T_N$
\begin{align}\label{cisequaltod}
\mc{D}_{k,\overline{N}}(\pi(x))=\mc{C}_{k,\overline{N}}(x)
\end{align}
which is again a consequence of $f_p(\pi x)=f_p(x)$ for
$p\in\widehat{T}_{\overline{N}}$. In other words
the kernel $\mc{C}_{k,\overline{N}}$ is already $(L^{\overline{N}}\mathbb{Z})^d$
periodic and hence also defines a function
$T_{\overline{N}}\rightarrow \mathbb{R}$ which we call
$\mc{D}_{k,\overline{N}}$. The previous
definition has the advantage that it makes clear that this kernel defines a Gaussian measure. 
The proof of $\mu_{k,\overline{N}}=\tau_{\ast}\nu_{k,\overline{N}}$ is standard.
We prove that the characteristic functions for both measures agree. Let
$v:T_N\rightarrow \mathbb{R}^m$ be a field. We have to show
\begin{align}
\int_{\mc{V}_N} e^{i\langle v,\p\rangle}\,\mu_{k,\overline{N}}(\d \p)=
\int_{\mc{V}_N} e^{i\langle v,\p\rangle}\,\tau_{\ast}\nu_{k,\overline{N}}(\d
\p).
\end{align}
The left hand side is the characteristic function of a Gaussian measure given by
\begin{align}\label{characteristicLHS}
\int_{\mc{V}_N} e^{i\langle v,\p\rangle}\,\mu_{k,\overline{N}}(\d
\p)=\exp\left(-\frac{\langle v,\ms{C}_{k,\overline{N}}v\rangle}{2}\right)
\end{align}
as completion of the square shows.
The right hand side is slightly more complicated. By the transformation formula we find
\begin{align}\begin{split}\label{characteristicRHS}
\int_{\mc{V}_N} e^{i\langle v,\p\rangle}\,\tau_{\ast}\nu_{k,\overline{N}}(\d \p)
&=\int_{\mc{V}_{\overline{N}}} e^{i\langle
v,\tau\psi\rangle}\,\nu_{k,\overline{N}}(\d \psi)\\
&=\int_{\mc{V}_{\overline{N}}} e^{i\langle \tau^{\ast}
v,\psi\rangle}\,\nu_{k,\overline{N}}(\d \psi)\\
&= e^{-\frac{1}{2}\langle \tau^{\ast}
v,\ms{D}_{k,\overline{N}}\tau^{\ast}v\rangle}.
\end{split}\end{align}
Here $\tau^{\ast}:\mc{V}_{N}\rightarrow \mc{V}_{\overline{N}}$ is the adjoint
 of $\tau:\mc{V}_{\overline{N}}\rightarrow \mc{V}_N$ with respect to the
standard scalar product on both spaces, i.e., $\tau^\ast$ is
characterised by $\langle \p,\tau\xi\rangle=\langle \tau^{\ast} \p,\xi\rangle $
for $\p\in\mc{V}_N$ and
$\xi\in\mc{V}_{\overline{N}}$.
It is easy to see that $\tau^{\ast}v(\overline{x})=\sum_{{x}\in
T_N:\pi(x)=\overline{x}} v(x)$.
Then we find
\begin{align}\begin{split}
\langle \tau^{\ast} v,\ms{D}_{k,\overline{N}}\tau^{\ast}v\rangle
&= \sum_{\overline{x},\overline{y}\in T_{\overline{N}}}
\tau^{\ast}v(\overline{x})\mc{D}_{k,\overline{N}}(\overline{x}-\overline{y}
)\tau^{ \ast} v(\overline{y })\\
&=\sum_{\overline{x},\overline{y}\in T_{\overline{N}}}
\sum_{\substack{x\in T_N \\ \pi(x)=\overline{x}}} \sum_{\substack{ y\in T_N\\
\pi(y)=\overline{y}}}
 v(x)\mc{D}_{k,\overline{N}}(\pi(x-y))v(y)\\
&=\sum_{x,y \in T_N}  v(x)\mc{C}_{k,\overline{N}}(x-y)v(y)\\
&=\langle v,\ms{C}_{k,\overline{N}} v\rangle.
\end{split}
\end{align}
Together with the equations \eqref{characteristicLHS} and \eqref{characteristicRHS} this shows the claim.

Conclusion: From equation \eqref{endstep1} and Step 2 we conclude
\begin{align}
\int_{\mc{V}_N} F_X(\xi)\,\mu_k(\d \xi)=\int_{\mc{V}_{N}} F_X(\p)
\,\tau_{\ast}\nu_{k,\overline{N}}(\d\p)
=\int_{\mc{V}_{\overline{N}}} F_X(\tau\psi) \,\nu_{k,\overline{N}}(\d\psi).
\end{align}
\end{proof}


\begin{proof}[Proof of Proposition \ref{proprenormmap}]
The proof relies on the bound \eqref{eq:finalderivbound} which we recall here. We consider a
smooth map
$M:(-\varepsilon,\varepsilon)\to \R^{s\times s}_{\mathrm{sym},+}$  and we denote $M=M(0)$ and $\dot{M}=\dot{M}(0)$. 
Then
\begin{align}\label{eq:finalderivbound2}
\left|\left.\frac{\d}{\d t} \int_{\mathbb{R}^s}  F(x) \,\mu_{M(t)}(\d x)\right|_{t=0}\right|\leq \lVert
F\rVert_{L^2(\mathbb{R}^s,\mu_{M(t)})}\lVert M^{-\frac{1}{2}}\dot{M}M^{-\frac{1}{2}}\rVert_{\mathrm{HS}}.
\end{align}
To see this we start from \eqref{firstderivative} and apply Cauchy-Schwarz 
\begin{align}\begin{split}
&\left|\left.\frac{\d}{\d t} \int_{\mathbb{R}^s}  F(x) \,\mu_{M(t)}(\d x)\right|_{t=0}\right|\\
&\qquad\leq \frac{1}{2}\left(\int_{\mathbb{R}^s}|F(x)|^2\,\mu_{M}(\d x)\right)^{\frac{1}{2}}
\left(\int_{\mathbb{R}^s}\left(\langle x, M^{-1}\dot{M}M^{-1}x\rangle-\tr M^{-\frac{1}{2}}\dot{M}M^{-\frac{1}{2}}\right)^2 \,\mu_{M}(\d
x)\right)^{\frac{1}{2}}.
\end{split}
\end{align}
Then the transformation $y=M^{-\frac12}x$ and an orthogonal transformation yields
\begin{align}
\int_{\mathbb{R}^s}\left(\langle x, M^{-1}\dot{M}M^{-1}x\rangle-\tr M^{-\frac{1}{2}}\dot{M}M^{-\frac{1}{2}}\right)^2 \,\mu_{M}(\d
x)=2\lVert M^{-\frac12}\dot{M}M^{-\frac12}\rVert_{\mathrm{HS}}^2.
\end{align}
Here the norm on the right hand side is the Hilbert Schmidt norm given
by $\lVert A\rVert_{\textrm{HS}}=\sqrt{\tr AA^T}$.
Lemma \ref{le:momentbound} below states a much more general version of this estimate.

We apply \eqref{eq:finalderivbound2} to the measures from Lemma \ref{lemmaloc}.
Set
 $M(t)=\ms{D}_{A+t\dot{A},k+1,k+1}$ and denote $\ms{\dot{D}}_{A,k+1,k+1}=\left.\frac{\d}{\d t}\ms{D}_{A+t\dot{A},k+1, k+1}\right|_{t=0}$ and
$\tilde{\ms{D}}=\ms{D}_{A,k+1,k+1}^{-\frac{1}{2}}
\ms{\dot{D}}_{A,k+1,k+1}\ms{D}_{A,k+1,k+1}^{-\frac{1}{2}}$. 
Combining this with Lemma \ref{lemmaloc} (here we need $F$ to be local) we get the estimate
\begin{align}
\left|D_A \int\limits_{\mc{X}_N} F(\p)\,\mu_{k}^{(A)}(\d\p)(\dot{A})\right|
=\left|D_A \int\limits_{\mc{V}_{k+1}}
F(\tau\psi)\,\nu_{k+1,k+1}^{(A)}(\d\psi)(\dot{A})\right|
\leq \lVert F\rVert_{L^2(\mc{V}_{k+1},\nu_{k+1,k+1}^{(A)})} \lVert\tilde{\ms{D}}\rVert_{\mathrm{HS}} 
\end{align}
where $\mu_{k+1}^{(A)}=\mu_{\ms{C}_{A,k+1}}$ and $\nu_{k+1,k+1}^{(A)}=\nu_{\ms{D}_{A,k+1,k+1}}$.
Note that reading Lemma \ref{lemmaloc} backwards implies 
$\lVert F\rVert_{L^2(\mc{V}_{k+1},\nu_{k+1,k+1}^{(A)})}
=\lVert F\rVert_{L^2(\mc{X}_{N},\mu_{k+1}^{(A)})}$.
The operators $\ms{D}_{A,k+1,k+1}$ are diagonal 
in Fourier space and satisfy by construction the equality
$\widehat{\mc{D}}_{A,k+1,k+1}(p)=\widehat{\mc{C}}_{A,k+1}(p)$ for
$p\in \widehat{T}_{k+1}\setminus\{0\}$ hence the Hilbert-Schmidt norm is given by
\begin{align}\label{eq:HSnormstart}
\lVert \tilde{\ms{D}}\rVert_{\mathrm{HS}}=\sum_{p\in \widehat{T}_{k+1}\setminus \{0\}}
\lVert\widehat{\mc{C}}_{A,k+1}(p)^{-\frac{1}{2}}\widehat{\mc{\dot{C}}}_{A,k+1}(p)\widehat{\mc{C}}_{A,k+1}(p)^{-\frac{1}{2}}\rVert^2_\mathrm{HS}.
\end{align}
Note that the Fourier mode for $p=0$ does not contribute because
it does not depend on $A$. Indeed $\mc{C}_{A,k}(x)=-M$ independent of $A$ for $|x|>L^{k+1}/2$
and $\widehat{\mc{C}}_{A,k}=0$ for all $A$.
We bound the expression in \eqref{eq:HSnormstart} using \eqref{keyquotientbound} and \eqref{Ahatestimate}
(denoting $\bd{A}_j=\bd{A}_j^{k+1}\subset \widehat{T}_{k+1}$)
\begin{align}
\begin{split}\label{eq:lastbound}
\sum_{p\in \widehat{T}_{k+1}\setminus \{0\}}
\lVert\widehat{\mc{C}}_{A,k+1}(p)^{-\frac{1}{2}}\widehat{\mc{\dot{C}}}_{A,k+1}(p)&\widehat{\mc{C}}_{A,k+1}(p)^{-\frac{1}{2}}\rVert_{\mathrm{HS}}
^2\\
&\leq 
m\sum_{p\in \widehat{T}_{k+1}\setminus \{0\}} \left(\lVert\widehat{\mc{C}}_{A,k+1}(p)^{-1}\rVert\lVert\widehat{\mc{\dot{C}}}_{A,k+1}(p)\rVert\right)^2\\
&\leq m\sum_{j=0}^{k+1} \sum_{p\in \bd{A}_j}
\myK^2L^{8(\tilde{n}+d)+4}L^{2(k+1-j)(n-\tilde{n})}  \\
&\leq
m\myK^2L^{8(\tilde{n}+d)+4}\sum_{j=0}^{k+1}|\bs{A}_j|L^{
(k+1-j)(2n-2\tilde { n })}\\
&\leq m\myK^2L^{8(\tilde{n}+d)+4}\sum_{j=0}^{k+1}\kappa(d)L^{(k+1-j)d}L^{
(k+1-j)(2n-2\tilde { n })}\\
&\leq C
\end{split}
\end{align}
where we used $2\tilde{n}-2n>d$ in the last step.
\end{proof}

We need a more general version of Proposition \ref{proprenormmap} to avoid the
loss of regularity in \cite{adams2016preprint}. Namely we must generalise the result
in Proposition \ref{proprenormmap} to higher order derivatives and we have to replace
the $L^2$ norm of $F$ on the right hand side of \eqref{eq:special} 
by a $L^p$ norm for any $p>1$. To understand the motivation for
this lemma we refer to the description of the renormalisation approach in the aforementioned work \cite{adams2016preprint}.
\begin{theorem}\label{prop:finalsmoothness}
  Let  $\ms{C}_{A,k+1}$  a finite range decomposition as in Theorem
\ref{finalFRD}
with $\tilde{n}-n>d/2$ and $X\subset T_N$ be a subset with diameter
$D=\mathrm{diam}_\infty(X)\geq
L^k$. Let $F:\mc{V}_N\rightarrow \mathbb{R}$ be a
functional that is measurable
with respect to the $\sigma$-algebra generated by $\{\p(x)|\, x\in X\}$, i.e.,
$F$ depends only on the values of the field $\p$ in $X$. Then for $\ell\geq 1$ and $p>1$ the following bound holds
\begin{align}
 \left|\frac{\d^\ell}{\d t^\ell}\int_{\mc{X}_N}\left. F(\p)\, \mu_{A+tA_1,k+
 1}(\d\p)\right|_{t=0}\right|\leq C_{p, L} (DL^{-k})^{\frac{d\ell}{2}} \lVert A_1\rVert^\ell
\lVert
F\rVert_{L^p(\mc{X}_N,\mu_{{A,k+1}})}.
\end{align}
\end{theorem}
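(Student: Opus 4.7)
The plan is to extend the proof of Proposition \ref{proprenormmap} in two directions: from the first to the $\ell$-th derivative, and from $L^2$ to $L^p$ norm of $F$. The two ingredients are a higher order Gaussian derivative bound (Lemma \ref{le:momentbound}) and the localisation Lemma \ref{lemmaloc} applied with the scale $\overline{N}$ chosen so that $L^{\overline{N}}>2D\geq L^{\overline{N}-1}$, so that $X$ fits inside a fundamental domain of $T_{\overline{N}}$.

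The first step is to establish an abstract inequality of the form
\begin{align*}
\left|\left.\frac{\d^\ell}{\d t^\ell}\int_{\R^s}F(x)\,\mu_{M(t)}(\d x)\right|_{t=0}\right|\leq C_{p,\ell}\,\lVert F\rVert_{L^p(\mu_M)}\sum_{\substack{a+b=\ell \\ b\geq 1}}\lVert\tilde{M}\rVert^a\lVert\tilde{M}\rVert_{\mathrm{HS}}^b,
\end{align*}
where $\tilde{M}=M^{-1/2}\dot{M}M^{-1/2}$. One writes $\d_t^\ell\p_{M(t)}|_{t=0}=P_\ell(x)\p_M(x)$ for a polynomial $P_\ell$ of degree $2\ell$ with vanishing mean under $\mu_M$, applies H\"older's inequality with conjugate exponent $q$ to $p$, and bounds $\lVert P_\ell\rVert_{L^q(\mu_M)}$ as follows: after the change of variables $y=M^{-1/2}x$, $P_\ell$ becomes a polynomial in standard Gaussian variables whose coefficients are built from $\tilde{M}$; Gaussian hypercontractivity reduces $L^q$ to $L^2$, and Wick's theorem (Isserlis) expresses the second moment as a sum of products of traces of powers of $\tilde{M}$, each bounded by the stated expression via $\tr(\tilde{M}^{2r})\leq\lVert\tilde{M}\rVert^{2(r-1)}\lVert\tilde{M}\rVert_{\mathrm{HS}}^2$. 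This combinatorial bookkeeping is the main obstacle: one must check that every surviving term carries at least one Hilbert--Schmidt factor ($b\geq 1$), since a pure operator-norm term would destroy the decay $(DL^{-k})^{d\ell/2}$.

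Next, Lemma \ref{lemmaloc} identifies the integral against $\mu_{A+tA_1,k+1}$ on $\mc{X}_N$ with an integral against $\nu_{A+tA_1,k+1,\overline{N}}$ on $\mc{V}_{\overline{N}}$, and the same lemma applied to $|F|^p$ shows $\lVert F\rVert_{L^p(\nu)}=\lVert F\rVert_{L^p(\mu_{A,k+1})}$. Differentiating in $t$ and applying the abstract inequality to $M(t)=\ms{D}_{A+tA_1,k+1,\overline{N}}$ yields the bound in terms of $\tilde{\ms{D}}=\ms{D}^{-1/2}\dot{\ms{D}}\ms{D}^{-1/2}$; linearity of $\dot{\ms{D}}$ in $A_1$ produces the factor $\lVert A_1\rVert^\ell$.

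Finally, using $\widehat{\mc{D}}_{A,k+1,\overline{N}}(p)=\widehat{\mc{C}}_{A,k+1}(p)$ for $p\in\widehat{T}_{\overline{N}}\setminus\{0\}$, the estimate \eqref{keyquotientbound} applied pointwise in $p$ yields $\lVert\tilde{\ms{D}}\rVert\leq C_L$. For the Hilbert--Schmidt norm I would repeat \eqref{eq:HSnormstart}--\eqref{eq:lastbound} almost verbatim, now summing over $\widehat{T}_{\overline{N}}$ and splitting the dyadic decomposition at $j=k+1$; using $|\bs{A}_j^{\overline{N}}|\leq\kappa(d)L^{(\overline{N}-j)d}$ and $\tilde{n}-n>d/2$ (so that the geometric series in $k+1-j$ converges), each of the two partial sums is dominated by $CL^{(\overline{N}-k-1)d}\leq C(DL^{-k})^d$. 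Since $DL^{-k}\geq 1$ by assumption and $b\leq\ell$, every term $\lVert\tilde{\ms{D}}\rVert^a\lVert\tilde{\ms{D}}\rVert_{\mathrm{HS}}^b$ with $a+b=\ell$ and $b\geq 1$ is bounded by $C_L(DL^{-k})^{d\ell/2}$, completing the proof.
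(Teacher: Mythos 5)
The overall strategy matches the paper's: reduce via the localisation Lemma \ref{lemmaloc} to the torus $T_{\overline{N}}$ with $L^{\overline{N}}>2D\geq L^{\overline{N}-1}$, establish an abstract $L^p$ Gaussian-derivative inequality, and control the resulting Hilbert--Schmidt norms through the Fourier bound \eqref{keyquotientbound}. You propose hypercontractivity plus Wick for the middle step, while the paper uses Whittle's moment bound (Lemma \ref{le:momentbound}); either route is viable.

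There is, however, a genuine gap in the abstract step. You claim the bound involves only $\tilde M=M^{-1/2}\dot M M^{-1/2}$, but for $\ell\geq 2$ the expression $\left.\d_t^\ell\p_{M(t)}\right|_{t=0}$ also contains every higher derivative $M^{(r)}(0)$, $r=2,\ldots,\ell$; already $\d_t^2\log\p_{M(t)}$ produces a term $\langle x, M^{-1}\ddot M M^{-1}x\rangle-\tr M^{-1}\ddot M$. In the application $M(t)=\ms{D}_{A+tA_1,k+1,\overline{N}}$ is, by Theorems \ref{AKMFRD} and \ref{finalFRD}, a polynomial of high degree in $A$, so $\ddot M\neq 0$; with $\dot M=0$ and $\ddot M\neq 0$ your right-hand side would vanish while the left does not. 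The subsequent claim that ``linearity of $\dot{\ms{D}}$ in $A_1$ produces the factor $\|A_1\|^\ell$'' rests on this incorrect form: $M^{(r)}$ has degree $r$ in $A_1$, and one must track products of factors $\myM_{\bs{\delta}}=\prod_i M^{-1/2}M^{(\delta^i)}M^{-1/2}$ with $\sum_i\delta^i=\ell$, as the paper does. The argument is salvageable precisely because \eqref{keyquotientbound} holds for \emph{every} derivative order $\ell\geq 1$, so each factor $M^{-1/2}M^{(\delta^i)}M^{-1/2}$ obeys the same pointwise Fourier bound and the final HS estimate is unchanged; but your abstract inequality needs to be restated in terms of all $M^{(r)}$ with the attendant multi-index bookkeeping.

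A minor further point: the remark that ``a pure operator-norm term would destroy the decay'' is backwards. Since $DL^{-k}\geq 1$, a term carrying only operator norms contributes $O_L(1)$, which is stronger (not weaker) than $(DL^{-k})^{d\ell/2}$. The actual reason Hilbert--Schmidt norms enter is that the trace contributions $\tr\myM_{\bs{\delta}}$ with at least two factors are controlled via the H\"older inequality for Schatten norms by products of HS norms, and the single-factor contributions carry the compensating trace subtraction (the paper's terms $S_d$) which gives them zero mean under $\mu_M$.
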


\begin{proof}
 We use the notation from section \ref{secsetting} and the proof of Proposition
\ref{proprenormmap}. We first give explicit calculations for $\ell=1$ and $\ell=2$ and indicate the general case in the end.
 Recall in particular the definition of $\p_{M(t)}(x)$ in \eqref{eq:defofpM} and note that
 \begin{align}\begin{split}\label{eqfirsttwoderivs}
\left.\frac{\d}{\d t} \p_{M(t)}(x)\right|_{t=0}&=\frac{1}{2}\left(\langle
x,M^{-1}\dot{M}M^{-1}x\rangle-\tr\, M^{-1}\dot{M}\right)\p_{M}(x)\\
\left.\frac{\d^2}{\d t^2}
\p_{M(t)}(x)\right|_{t=0}&=\left[\frac{1}{4}\left(\langle
x,M^{-1}\dot{M}M^{-1}x\rangle-\tr M^{-1}\dot{M}\right)^2-
\langle x,M^{-1}\dot{M}M^{-1}\dot{M}M^{-1}x\rangle+\right. \\
&\;+\frac{1}{2}\left(\left.\langle x,M^{-1}\ddot{M}M^{-1}x\rangle-\tr
\left(M^{-1}\ddot{M}\right)\right)
+\frac{1}{2}\tr \left(M^{-1}\dot{M}M^{-1}\dot{M}\right)\right]\p_{M}(x).
\end{split}
\end{align}
We need the following general   lemma from 
\cite{Whittle:1960:BML}.
\begin{lemma}\label{le:momentbound}
 Let $X$ be a vector of $n$ independent standard normal variables and $A\in
\mathbb{R}^{n\times n}$ a matrix.
 Then for any real $s_0\geq 2$ there is a constant $C(s_0)$ such that for $1\leq
s\leq s_0$ the estimate 
 \begin{align}
 \mathbb{E}|\langle x,Ax\rangle -\tr A|^s\leq C(s_0)\lVert
A\rVert_{\textrm{HS}}^s
 \end{align}
 holds. 
 \end{lemma}
 \begin{proof}
 This is a special case of Theorem 2 in 
\cite{Whittle:1960:BML}. 
 The extension from $s\geq 2$ to $s\geq 1$ is a direct consequence of Hölder's inequality.
 \end{proof}
Using this lemma and the Hölder inequality with exponents $p$ and $p'$ we bound
\begin{align}\label{eq:firstderivativeFinal}
\begin{split}
 \left|\frac{\d}{\d t} \int_{\mathbb{R}^s}\left. F(x)\,
\mu_{M(t)}(\d x)\right|_{t=0}\right|&\leq
\frac12\lVert F\rVert_{L^p(\mu_M)}
\lVert \langle x,M^{-\frac12}\dot{M}M^{-\frac12}x\rangle-\tr
M^{-\frac12}\dot{M}M^{-\frac12}\rVert_{L^{p'}(\mu_\mathrm{id})} \\
&\leq C_p \lVert F\rVert_{L^p(\mu_M)}
\lVert M^{-\frac12}\dot{M}M^{-\frac12}\rVert_{\mathrm{HS}}
\end{split}
\end{align}

For the second derivative 
 we find similarly the bound 
 \begin{align}
 \begin{split}\label{eq:secondderivativeBound}
  &\left|\frac{\d^2}{\d t^2} \int_{\mathbb{R}^s}\left. F(x)\,
\mu_{M(t)}(\d x)\right|_{t=0}\right|\leq\\
&\quad\lVert F\rVert_{L^p(\mu_M)}
\left(\frac12|\tr M^{-\frac12}\dot{M}M^{-1}\dot{M}M^{-\frac12}|+ 
\right. \\
&\qquad \qquad \qquad \qquad+ 
\frac14\lVert(\langle x,M^{-\frac12}\dot{M}M^{-\frac12}x\rangle-\tr
M^{-\frac12}\dot{M}M^{-\frac12})^2\rVert_{L^{p'}(\mu_\mathrm{id})}+\\
&\qquad \qquad \qquad \qquad+\frac12\lVert \langle
x,M^{-\frac12}\ddot{M}M^{-\frac12}x\rangle-\tr
M^{-\frac12}\ddot{M}M^{-\frac12}\rVert_{L^{p'}(\mu_\mathrm{id})}+\\
&\qquad \qquad \qquad \qquad\left.+\lVert (\langle
x,M^{-\frac12}\dot{M}M^{-1}\dot{M}M^{-\frac12}x\rangle-\tr
M^{-\frac12}\dot{M}M^{-1}\dot{M}M^{-\frac12}\rVert_{L^{p'}(\mu_\mathrm{id})}
\right)
\end{split}
 \end{align}
where $p'$ denotes the Hölder conjugate of $p$.
The first trace term in \eqref{eq:secondderivativeBound}  can be bounded by 
Hölder inequality for
Schatten norms
(cf. Theorem 2.8 in \cite{MR2154153}) which yields
\begin{align}\label{eq:boundTraceTerm}
 |\tr M^{-\frac12}\dot{M}M^{-1}\dot{M}M^{-\frac12}|\leq
 \lVert M^{-\frac12}\dot{M}M^{-1}\dot{M}M^{-\frac12}\rVert_{\mathrm{Tr}}
\leq \lVert M^{-\frac12}\dot{M}M^{-\frac12}\rVert_{\mathrm{HS}}^2.
\end{align}
The trace norm is defined by $\lVert A\rVert_{\mathrm{Tr}}=\tr \sqrt{AA^\ast}$.
 With  the bound \eqref{eq:boundTraceTerm}, Lemma \ref{le:momentbound}, and the estimate $\lVert
AB\rVert_{\mathrm{HS}}\leq \lVert AB\rVert_{\mathrm{Tr}}\leq  \lVert
A\rVert_{\mathrm{HS}}\lVert
B\rVert_{\mathrm{HS}}$
 we conclude from \eqref{eq:secondderivativeBound}
 \begin{align}
  \begin{split}\label{eq:secondderivativeFinal}
  &\left|\frac{\d^2}{\d t^2} \int_{\mathbb{R}^s}\left. F(x)\,
\mu_{M(t)}(\d x)\right|_{t=0}\right|\leq \\
&\qquad\leq
C_p \lVert F\rVert_{L^p(\mu_M)}
\left(\lVert M^{-\frac12}\dot{M}M^{-\frac12}\rVert_{\mathrm{HS}}+\lVert
M^{-\frac12}\ddot{M}M^{-\frac12}\rVert_{\mathrm{HS}}+\lVert
M^{-\frac12}\dot{M}M^ {-\frac12}\rVert_{\mathrm{HS}}^2
\right).
\end{split}
 \end{align}
 For $\ell> 2$ the estimates are similar but more involved. We introduce some additional notation.
Let us state the general structure of $\left.\frac{\d^\ell}{\d t^\ell} \p_{M(t)}(x)\right|_{t=t_0}$.
We write $\left.\frac{\d^{\delta^i}}{\d t^{\delta^i}}M(t)\right|_{t=t_0}=M^{\delta_i}(t_0)$ for any positive integer $\delta^i$.
In the following we consider multiindices $\bs{\delta}=(\delta^1,\ldots,\delta^r)$ such that 
$r\geq 1$ and $\delta^i>0$ are integers. The length
of an index is denoted by $\myL(\bs{\delta})=r$, we call $|\bs{\delta}|_1=\sum_{i=1}^r \delta^i$ 
the degree of an index. With $I_\ell$ we denote the
set of such multiindices $\bs{\delta}$ such that $|\bs{\delta}|_1 \leq \ell$ and $r\geq 2$. 
Moreover $\overline{I}_\ell$ is
defined similarly with $r\geq 2$ replaced by $r\geq 1$. This distinction accounts for a cancellation
 for the $r=1$ terms.
We define for any $\bs{\delta}=(\delta_1,\ldots,\delta_r)$ and $r\geq 1$
\begin{align}
\myM_{\bs{\delta}}(t)=\prod_{i=1}^r M^{-\frac{1}{2}}(t)M^{\delta^i}(t)M^{-\frac{1}{2}}(t).
\end{align}

In the following three types of terms appear: 
\begin{align}\begin{split}
Q_{\bs{\delta}}(x,t)=\langle M(t)^{-\frac{1}{2}}x,\myM_{\bs{\delta}}(t) M(t)^{-\frac{1}{2}}x\rangle,\quad  
R_{\bs{\delta}}(t)=\tr
\myM_{\bs{\delta}}(t)\quad\text{with}\; \bs{\delta}\in I_\ell\;\;\text{and}\\
S_d(x,t)=\langle x,M(t)^{-1}M^d(t)M(t)^{-1}x\rangle-\tr
M(t)^{-1}M^d(t)\quad \text{with} \; 1\leq d\leq \ell .
\end{split}
\end{align}
The general expression can be written as
\begin{align}
\left.\frac{\d^\ell}{\d t^\ell} \p_{M(t)}(x)\right|_{t=t_0}=P_\ell(x,t_0)\p_{M(t_0)}(x)
\end{align}
where $P_\ell(x,t)$ is a linear combination of terms
\begin{align}\label{structureofsummands}
\prod_{i=1}^{k_1} Q_{\bs{\delta}_i}(x,t)
\prod_{i=1}^{k_2} R_{\bs{\epsilon}_i}(t)\prod_{i=1}^{k_3} S_{d_i}(x,t).
\end{align}
Here $\bs{\delta}_i,\bs{\epsilon}_i\in I_\ell$ and $1\leq d_i\leq \ell$
such that $\sum_{i=1}^{k_1}|\bs{\delta}_i|_1+
\sum_{i=1}^{k_2}|\bs{\epsilon}_i|_1+\sum_{i=1}^{k_3} d_i=\ell$, i.e. the total order of derivatives is $\ell$.
 From now on we drop the time argument and assume $t=0$.
 The explicit calculations for $\ell=1$ and $\ell=2$ showed that we need to bound
 the $p'$-norm of $P_\ell$. 
 Using Hölder's inequality with exponents $\frac{\ell}{|\bs{\delta}_i|_1}$, $\frac{\ell}{|\bs{\epsilon}_i|_1}$,
 and $\frac{\ell}{d_i}$ we estimate
 \begin{align}
 \begin{split}
    \left\lVert \prod_{i=1}^{k_1} Q_{\bs{\delta}_i}(x)
\prod_{i=1}^{k_2} R_{\bs{\epsilon}_i} \prod_{i=1}^{k_3} S_{d_i}(x)\right\rVert_{L^{p'}(\mu_{M})}
%
%
 &\leq \prod_{i=1}^{k_1} \lVert Q_{\bs{\delta}_i}(x)\rVert_{{\frac{p'\ell}{|\bs{\delta}_i|_1}}}
\prod_{i=1}^{k_2} \lVert R_{\bs{\epsilon}_i}\rVert_{\frac{p'\ell}{|\bs{\epsilon}_i|_1}}
 \prod_{i=1}^{k_3} \lVert S_{d_i}(x)\rVert_{\frac{p'\ell}{d_i}}
 \\
&\leq \max_{\bs{\delta}\in I_\ell} 
\lVert Q_{\bs{\delta}}(x)\rVert_{\frac{p'\ell}{|\bs{\delta}|_1}}^{\frac{\ell}{|\bs{\delta}|_1}}
\vee  \max_{\bs{\delta}\in I_\ell}
|R_{\bs{\delta }}|^{\frac{\ell}{|\bs{\delta}|_1}}) \vee 
\max_{1\leq d\leq \ell}
 \lVert S_{d}(x)\rVert_{{\frac{p'\ell}{d}}}^{\frac{\ell}{d_i}}
 \\
 &\leq 
 \max_{\bs{\delta}\in I_\ell} 
\lVert Q_{\bs{\delta}}(x)\rVert_{{p'\ell}}^{\frac{\ell}{2}}
\vee  \max_{\bs{\delta}\in I_\ell}
|R_{\bs{\delta }}|^{\frac{\ell}{2}} \vee 
\max_{1\leq d\leq \ell}
 \lVert S_{d}(x)\rVert_{{{p'\ell}}}^{{\ell}{}}\vee 1
 \end{split}
 \end{align}
 In the second step we used that $\prod x_i^{\alpha_i}\leq \max x_i$ if $\sum \alpha_i=1$.
 Then, in the third step, we exploited that
 the $L_p$ norms are monotone on probability spaces and $|\bs{\delta}|_1\geq \myL(\bs{\delta})\geq 2$ 
 for $\bs{\delta}\in I_\ell$.
 
We estimate the three terms with the help of Lemma \ref{le:momentbound}.
 For $1\leq d\leq \ell$
 \begin{align}\begin{split}\label{hsbound1}
 \int_{\R^s} |S_d(x)|^{{p'\ell}}\,\mu_M(\d x)&=
 \int_{\R^s} |\langle x,M^{-\frac{1}{2}}M^dM^{-\frac{1}{2}}x\rangle-\tr M^{-1}M^d|^{p'\ell}\,\mu_{\textrm{id}}(\d x)\\
& \leq C_{p'\ell}\lVert M^{-\frac{1}{2}}M^dM^{-\frac{1}{2}}\rVert_{\textrm{HS}}^{{p'\ell}{}}.
\end{split}
 \end{align}
Similarly we estimate the $Q$ terms
\begin{align}\begin{split}\label{hsbound2}
\int_{\R^s} |Q_{\bs{\delta}}(x)|^{{p'\ell }{}}\,\mu_M(\dx)
&=\int_{\R^s} |\langle y,\myM_{\bs{\delta}} y\rangle|^{{p'\ell }{}}\,
\mu_{\mathrm{id}}(\dy)\\
&\leq \int_{\R^s} 2^{{p'\ell}{}}\left(|\langle y,\myM_{\bs{\delta}} 
y\rangle-\tr \myM_{\bs{\delta}}|^{{p'\ell
}{}}+|\tr
\myM_{\bs{\delta}}|^{{ p'\ell }{}}\right)\,\mu_{\mathrm{id}}(\dy)\\
&\leq 2^{{p' \ell }{}}\left(C_{p'\ell}\lVert
\myM_{\bs{\delta}}\rVert_{\mathrm{HS}}^{{p' \ell }{}}
+|R_{\bs{\delta}}|^{{p' \ell}{}}\right).
\end{split}
\end{align}

For $\bs{\delta}\in I_\ell$ we find $\bs{\delta}_1, \bs{\delta}_2\in \bar{I}_\ell$
such that $\bs{\delta}=(\bs{\delta}_1,\bs{\delta}_2)$.
We bound the trace term for $\bs{\delta}$ using $\myM_{\bs{\delta}}=\myM_{\bs{\delta}_1}\myM_{\bs{\delta}_2}$
and the Hölder inequality for
Schatten norms
 by
\begin{align}\label{hsbound3}
|R_{\bs{\delta}}|=|\tr \myM_{\bs{\delta}}|
\leq \lVert \myM_{\bs{\delta}_1}\rVert_{\mathrm{HS}}\lVert \myM_{\bs{\delta}_2}\rVert_{\mathrm{HS}}
\leq \lVert \myM_{\bs{\delta}_1}\rVert_{\mathrm{HS}}^2\vee \lVert
\myM_{{\bs{\delta}_2}}\rVert_{\mathrm{HS}}^2.
\end{align} 
The estimates \eqref{hsbound1}, \eqref{hsbound2}, and \eqref{hsbound3}
imply that there is a constant $\bar{C}_{\ell,p}$ such that the following estimate holds

\begin{align}\label{hsboundfinal2}
\begin{split}
\left|\left.\frac{\d^\ell}{\d t^l} \int_{\R^s} F(x)\,\mu_{M(t)}(\d x)
\right|_{t=0}\right|&\leq
\lVert F\rVert_{L^p(\mu_M)}\left(\int_{\R^s} |P_\ell(x)|^{p'}\,\mu_M(\d x)\right)^{\frac{1}{p'}}
\\
&\leq \bar{C}_{\ell,p} \lVert F\rVert_{L^p(\mu_M)}\left( \max_{\bs{\delta}\in
\bar{I}_\ell} \lVert
\myM_{\bs{\delta}}\rVert_{\mathrm{HS}}^{\ell}\vee 1\right).
\end{split}
\end{align}
 After these  technical estimates the rest of the proof is 
 similar to the proof of Proposition \ref{proprenormmap}.
We consider
$M(t)=\ms{D}_{A+t\dot{A},k+1,\overline{N}}$ where $\overline{N}$ is the
smallest integer such that $L^{\overline{N}}\geq 2D$ (or $\overline{N}=N$ if
$D\geq L^N/2$).
Moreover we again denote 
$\ms{{D}}_{A,k+1,k+1}^r=\left.\frac{\d^r}{\d t^r}\ms{D}_{A+t\dot{A},k+1,
\overline{N}}\right|_{t=0}$ and for $\bs{\delta}\in I_\ell$
\begin{align}
 \ms{M}_{\bs{\delta}}
 =\prod_{i=1}^{\myL(\bs{\delta})} \ms{D}_{A,k+1,\overline{N}}^{-\frac{1}{2}}
\ms{{D}}^{\delta_i}_{A,k+1,\overline{N}}\ms{D}_{A,k+1,\overline{N}}^{-\frac{1}{2}}.
\end{align}
Using Lemma \ref{lemmaloc} and \eqref{hsboundfinal2} we bound
\begin{align}
\begin{split}\label{eq:boundwithF}
\left|D^\ell_A \int_{\mc{X}_N} F(\p)\mu_{k+1}^{(A)}(\d\p)(\dot{A},\ldots,\dot{A})\right|
&=\left|D^\ell_A \int_{\mc{V}_{\overline{N}}}
F(\tau\psi)\nu_{k+1,\overline{N}}^{(A)}(\d\psi)(\dot{A},\ldots,\dot{A})\right|\\
&\leq C\lVert F\rVert_{L^p(\mc{V}_{\overline{N}},\nu_{k+1,\overline{N}}^{(A)})} 
\left(\max_{\bs{\delta}\in I_\ell}\lVert\ms{M}_\delta\rVert_{\mathrm{HS}}^\ell
\vee 1 \right)
\\
&= C\lVert F\rVert_{L^p(\mc{X}_{N},\mu_{k+1}^{(A)})} 
\left(
\max_{\bs{\delta}\in I_\ell}\lVert\ms{M}_\delta\rVert_{\mathrm{HS}}^\ell\vee 1\right)
\end{split}
\end{align}
where $\mu_{k+1}^{(A)}=\mu_{\ms{C}_{A,k+1}}$ and $\nu_{k+1,\overline{N}}^{(A)}=\nu_{\ms{D}_{A,k+1,\overline{N}}}$.
For the last identity we used  Lemma \ref{lemmaloc} backwards.

It remains to estimate the Hilbert-Schmidt norm of the operators $\ms{M}_{\bs{\delta}}$.
The operators $\ms{M}_{\bs{\delta}}$ are diagonal 
in Fourier space and  by construction we have the equality
$\widehat{\mc{D}}_{A,k+1,\overline{N}}(p)=\widehat{\mc{C}}_{A,k+1}(p)$ for
$p\in \widehat{T}_{\overline{N}}$ and all $A$ hence we can bound
the operator norm of the  Fourier modes of $\ms{M}_{\bs{\delta}}$ for $\lVert \dot{A}\rVert \leq 1$
using \eqref{keyquotientbound} for $p\in \bs{A}_j$ and $j<k$ as follows
\begin{align}
\begin{split}
 \lVert \widehat{\ms{M}}_{\bs{\delta}}(p)\rVert&\leq \prod_{i=1}^{\myL(\bs{\delta})} \lVert\widehat{\mc{D}}_{k,\bar{N}}^{-1}(p)\rVert\cdot
\lVert \widehat{\mc{D}}_{k,\bar{N}}^{\delta_i}(p)\rVert\\
&\leq \prod_{i=1}^{\myL(\bs{\delta})}\myK(\delta_i)L^{4(\tilde{n}+d)+2}L^{(k-j)(n-\tilde{n})}
\\
&\leq \bar{\myK}(\bs{\delta})^2L^{4 \myL(\bs{\delta})(\tilde{n}+d)+2\myL(\bs{\delta})} 
L^{(k-j)(n-\tilde{n})}.
\end{split}
\end{align}
Here we wrote $\bar{\myK}(\bs{\delta})=\prod_{j=1}^{\myL(\bs{\delta})} \myK(\delta_j)$ for the product of the constants.
Similar, for $p\in A_j$ with $j\geq k$, 
\begin{align}\label{eqfourierestimates13}
\lVert \widehat{\ms{M}}_\delta(p)\rVert&\leq  \bar{\myK}(\bs{\delta})L^{2\myL(\bs{\delta})(\tilde{n}+d)+\myL(\bs{\delta})}.
\end{align}
Then the Hilbert-Schmidt norm is bounded by 
(with $\bd{A}_j=\bd{A}_j^{\overline{N}}\subset \widehat{T}_{\overline{N}}$)
\begin{align}
\begin{split}
\lVert \ms{M}_{\bs{\delta}}\rVert_{\mathrm{HS}}^2
&=\sum_{p\in
\widehat{T}_{\overline{N}}\setminus \{0\}} \lVert \widehat{\ms{M}}_{\bs{\delta}}(p)\rVert_{\mathrm{HS}}^2
\leq m\sum_{p\in
\widehat{T}_{\overline{N}}\setminus \{0\}} \lVert \widehat{\ms{M}}_{\bs{\delta}}(p)\rVert^2
\\
&\leq m\sum_{j=0}^{k-1} \sum_{p\in \bd{A}_j}
\bar{\myK}^s2L^{8\ell(d+\tilde{n})+4}L^{2(k-j)(n-\tilde{n})} 
+m\sum_{j=k}^{\overline{N}} \sum_{p\in \bd{A}_j}
\bar{\myK}^2L^{4\ell(d+\tilde{n})+2}\\
&\leq
CL^{8\ell(d+\tilde{n})+4}\sum_{j=0}^{k-1}L^{(\overline{N}-j)d}L^{
(k-j)(2n-2\tilde{n})}
+CL^{4\ell(d+\tilde{n})+2}\sum_{j=k}^{\overline{N}}L^{(\overline{N}-j)d}
\\
&\leq 
C L^{8\ell(d+\tilde{n})+4} L^{(\overline{N}-k)d}.
\end{split}
\end{align}
where we used $2\tilde{n}-2n>d$ in the last step.
The bound \eqref{eq:lastbound} and $2D>L^{\overline{N}-1}$ imply that 
\begin{align}
\lVert \ms{M}_{\bs{\delta}}\rVert_{\mathrm{HS}}^\ell
\leq CL^{\frac{\ell}{2}(\overline{N}-k)d}< C (DL^{-k})^{\frac{\ell d}{2}}.
\end{align}
Plugging this into \eqref{eq:boundwithF} ends the proof.

\end{proof}

\begin{appendix}\label{app:A}
 \section{Proof of Theorem \ref{AKMFRD}}
 In this appendix we discuss those details of the proof of Theorem \ref{AKMFRD} that are not already
 contained in the much more general discussion in \cite{MR3129804}.
 The key ingredient of the proof is the following lemma
 \begin{lemma}[Lemma 2.3. in \cite{MR3129804}]
  Let $B>0$ be a constant. There is  a smooth family of functions
$W_t\in C^\infty(\mathbb{R})$ for $t>0$
  such that for  $\lambda\in (0,B)$, $t>0$ and integers $\ell,n\geq0$ the
following holds
\begin{align}\label{eq:keyWproperty}
  \lambda^{-1}=\int_{0}^\infty tW_t(\lambda)\,\d t, \\
  W_t(\lambda)\geq 0 \\ \label{eq:keyWestimate}
(1+t^2\lambda)^n\lambda^\ell\left|\frac{\partial^\ell}{\partial\lambda^\ell}
W_t(\lambda)\right|\leq C_{\ell,n}.
\end{align}
Moreover $W_t$ is a polynomial of degree at most $t$. 
In addition we can choose $W_t$ such that it satisfies
\begin{align}\label{eq:Wlowerbound}
 W_t(\lambda)\geq \epsilon
\end{align}
for some $\epsilon>0$ and  $\lambda\leq B\min(1,t^{-2})$.
 \end{lemma}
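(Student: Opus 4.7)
The plan is to define $W_t(\lambda) = 2W(t\sqrt\lambda)$ for a carefully chosen profile $W$, and then to promote this composition to a polynomial of degree at most $t$ via a wave-equation representation.

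First I would choose a smooth $W \colon [0,\infty) \to [0,\infty)$ compactly supported in some $[0,M]$ with $W(u) \geq \epsilon$ on $[0,\sqrt B]$ and $\int_0^\infty u W(u)\,du = \tfrac12$. This can be arranged by taking $W$ constant equal to $\epsilon$ on $[0,\sqrt B]$ (with $\epsilon$ small enough that $\epsilon B/2 \leq 1/2$), smoothly interpolating down, and adding a small non-negative bump beyond $\sqrt B$ to enforce the moment condition. With $W_t(\lambda) \defeq 2W(t\sqrt\lambda)$, the substitution $u = t\sqrt\lambda$ converts \eqref{eq:keyWproperty} into the identity $\lambda^{-1} = 2\lambda^{-1}\int_0^\infty u W(u)\,du$, as required. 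Positivity is immediate. The lower bound \eqref{eq:Wlowerbound} follows because $\lambda \leq B\min(1,t^{-2})$ forces $t\sqrt\lambda \leq \sqrt B$ (check separately $t\leq 1$ and $t\geq 1$), so $W_t(\lambda) \geq 2\epsilon$.

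For the derivative estimate \eqref{eq:keyWestimate}, an induction on $\ell$ using the chain rule with $u = t\sqrt\lambda$ shows that $\lambda^\ell \partial_\lambda^\ell W_t(\lambda)$ is a universal polynomial in $u$ and the derivatives $W(u), W'(u),\ldots, W^{(\ell)}(u)$, hence uniformly bounded. On the support of $W_t$ one has $t^2\lambda = u^2 \leq M^2$, so $(1+t^2\lambda)^n \leq (1+M^2)^n$, and the bound follows.

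The delicate point is the polynomial property. The composition $W(t\sqrt\lambda)$ is not a polynomial in $\lambda$. The approach of \cite{MR3129804} is to choose $W$ as the cosine transform of a compactly supported weight $w \in C_c^\infty([0,1])$, so that
\begin{align*}
W_t(\lambda) = 2 \int_0^\infty w(u)\cos(ut\sqrt\lambda)\,du.
\end{align*}
Applied to the operator $\ms{A}$, finite propagation for the wave equation gives $W_t(\ms{A})$ a range controlled by the support of $w(\cdot/t)$, namely $\lesssim t$. Expanding $\cos(ut\sqrt\lambda)$ in its Chebyshev polynomial series in $\lambda \in (0,B)$ and truncating at degree $\lfloor t \rfloor$ yields a polynomial approximation with error decaying faster than any inverse power of $t$, which can be absorbed into a small correction preserving the other bounds. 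The normalisation $\int u W(u)\, du = \tfrac12$ then becomes a moment condition on $w$, and non-negativity of $W_t$ is achieved by taking $w = w_0 \ast w_0$ for some $w_0 \in C_c^\infty$, so that $W = |\hat{w}_0|^2 \geq 0$.

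The main obstacle is reconciling the three competing demands: the degree-$t$ bound, pointwise non-negativity, and the quantitative lower bound on $[0,B\min(1,t^{-2})]$. Chebyshev truncation of a non-negative function can introduce sign changes of the size of the tail error, so the sum-of-squares structure $W = |\hat{w}_0|^2$ is crucial: squaring the degree-$O(t)$ polynomial approximation of $\hat{w}_0(t\sqrt\lambda)$ yields a manifestly non-negative polynomial of degree $O(t)$ in $\lambda$, and the lower bound on $W_t$ is then inherited from the lower bound on $|\hat{w}_0|^2$ together with uniform control of the approximation error.
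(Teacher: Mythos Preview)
Your first two paragraphs are fine: the profile $W_t(\lambda)=2W(t\sqrt{\lambda})$ does give the identity \eqref{eq:keyWproperty}, positivity, the derivative bound \eqref{eq:keyWestimate}, and the lower bound \eqref{eq:Wlowerbound}. The gap is exactly where you flag it: the polynomial property. Your proposed repair --- expand $\cos(ut\sqrt{\lambda})$ in a Chebyshev series in $\lambda$, truncate at degree $\lfloor t\rfloor$, and absorb the error --- does not go through as written. There is no reason the Chebyshev coefficients of $\lambda\mapsto\cos(c\sqrt{\lambda})$ should decay at the rate you claim (the function is entire but not a polynomial, and the parameter $t$ enters only through $c=ut$), and any truncation destroys the exact identity $\lambda^{-1}=\int_0^\infty tW_t(\lambda)\,\d t$. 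The subsequent ``square a polynomial approximation of $\widehat{w}_0$'' idea has the same defect: once you approximate, the moment identity is lost, and you give no mechanism to restore it while keeping the degree bound and non-negativity simultaneously.

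The construction in \cite{MR3129804}, reproduced in the paper, avoids approximation entirely by replacing $\sqrt{\lambda}$ with $\theta=\arccos\!\bigl(1-\tfrac{\lambda}{2B}\bigr)$. One sets
\[
W_t(\lambda)=\sum_{n\in\mathbb{Z}}\p\bigl(t\theta-2\pi nt\bigr),\qquad \p=|\kappa|^2,\quad \mathrm{supp}\,\widehat{\kappa}\subset[-\tfrac12,\tfrac12].
\]
Poisson summation turns this periodisation into a finite trigonometric sum $\sum_{|m|\le t}c_m\cos(m\theta)$, and since $\cos(m\theta)=T_m\bigl(1-\tfrac{\lambda}{2B}\bigr)$ is a genuine Chebyshev polynomial, $W_t$ is \emph{exactly} a polynomial in $\lambda$ of degree $\le t$ --- no truncation, no error term, and the decomposition identity holds on the nose. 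Non-negativity is built in via $\p=|\kappa|^2$. The lower bound \eqref{eq:Wlowerbound} then follows because $\widehat{\kappa}\ge 0$ forces $\kappa(x)>\sqrt{\epsilon}$ for $|x|<2$, and $\lambda\le B\min(1,t^{-2})$ together with $\arccos(1-x)\le\tfrac{\pi}{2}\sqrt{2x}$ keeps the argument $t\theta$ in that range. The key idea you are missing is this $\arccos$ substitution, which converts the finite-propagation structure into an exact polynomial statement rather than an approximate one.
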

\begin{proof}
 This is Lemma 2.3. in \cite{MR3129804} (with rescaled $\lambda$) except for the
lower
bound \eqref{eq:Wlowerbound}. The lower bound is however easily obtained from the construction in
\cite{MR3129804}.
One possible choice for the function $W_t$ is
given by
\begin{align}
 W_t(\lambda)=\sum_{n\in \mathbb{Z}}
\p\left(\arccos\left(1-\frac{\lambda}{2B}\right)t-2\pi nt\right)
\end{align}
for $\lambda\in (0,B)$ where $\p:\mathbb{R}\rightarrow \mathbb{R}^+_0$ is any symmetric positive
function such that $\widehat{\p}$ is supported in $[-1,1]$ and smooth.
More precisely, we let $\p=|\kappa|^2$ where $\widehat{\kappa}$ is even
and supported in $[-\frac12,\frac12]$ and we can moreover choose it to be
non-negative, which implies for $|x|<2$ that 
\begin{align}
 \kappa(x)=\frac{1}{2\pi}\int_\mathbb{R} \widehat{\kappa}(k)e^{ikx}\,\d
x=\frac{1}{2\pi}\int_{-\frac12}^{\frac12} \widehat{\kappa}(k)\cos(kx)\,\d x>\sqrt{\epsilon}
\end{align}
for some $\epsilon>0$.
The bound $\arccos(1-x)\leq \frac{\pi}{2}\sqrt{2x}$ for $x\in[0,2]$ implies
that for $\lambda<B\min(t^{-2},1)$ the estimate $|t\arccos(1-\frac{\lambda}{2B})|<2$
holds.
Hence for those $\lambda$ we bound
\begin{align}
 W_t(\lambda)\geq
\p\left(\arccos\left(1-\frac{\lambda}{2B}
\right)t\right)=\left|\kappa\left(\arccos\left(1-\frac { \lambda } { 2B }
\right)t\right)\right|^2\geq
\epsilon.
\end{align}
\end{proof}
\begin{proof}[Proof of Theorem \ref{AKMFRD}]
 We set $B=\pi^2d\Omega\geq \max(\mathrm{spec}(\ms{A}))$
 where $\Omega$ is the constant in \eqref{Ahatestimate}.
Based on the previous lemma we obtain a finite range decomposition by defining
for
$2\leq k\leq N$
\begin{align}\begin{split}
  \ms{C}_k&=\int_{\frac{L^{k-1}}{2R}}^\frac{L^k}{2R} tW_t(\ms{A})\,\d t\\
 \ms{C}_1&=\int_0^\frac{L}{2R} tW_t(\ms{A})\,\d t\\
  \ms{C}_{N+1}&=\int_{\frac{L^{N}}{2R}}^\infty tW_t(\ms{A})\,\d t
 \end{split}
\end{align}
This decomposition indeed satisfies $\sum_{k=1}^{N+1}\ms{C}_k=\ms{C}$ because
$\mathrm{spec}(\ms{A})\subset [0,B]$ by \eqref{Ahatestimate} and property \eqref{eq:keyWproperty}.
Since $W_t$ is a polynomial of degree at most $t$ and
$\mathrm{supp}({\ms{A}\p})\subset \mathrm{supp}(\p)+[-R,\ldots,R]^d$ the finite
range property in the theorem holds.

Next, we want to show that the matrix $-M_k=\mc{C}_{A,k}(x)$ with $|x|\geq L^k/2$ is positive definite and independent of $A$.
The kernel $\mc{C}_{A,k}$ is uniquely characterised by the conditions that $\ms{C}_{A,k}\p=\mc{C}_{A,k}\ast \p$ for $\p \in \mc{X}_N$
and $\mc{C}_{A,k}\in \mc{M}_N$ (space of matrix valued kernels with average zero).
By construction of $\ms{C}_{A,k}=\ms{C}_k(\ms{A})$ we know that for $1\leq k\leq N$ there are coefficients 
$c_{k,l}$ such that $\ms{C}_k(\ms{A})=\sum c_{k,l}\ms{A}^l$.
Observe that the action of the finite difference operator $\ms{A}$ can be written as
\begin{align}
 \ms{A}\p(x)=\sum_{y\in [-R,R]^d} a_y \p(x+y)
\end{align}
where $a_y\in \R^{m\times m}$ are coefficients such that $\sum_y a_y=0$.
In particular the kernel  $\mc{A}\in \mc{M}_N$ of $\ms{A}$ satisfies $\mc{A}(x)=0$ for
$d_\infty(x,0)>R$. The same holds for 
powers $\ms{A}^l$ and  $d_\infty(x,0)>lR$ because the kernel of $\ms{A}^l$
is given by the $l$-fold convolution $\mc{A}\ast\ldots \ast \mc{A}$.
Therefore only the constant term of the polynomial  contributes to the kernel of 
$\mc{C}_{A,k}(x)$ for $|x|\geq L^k/2$ in particular it is independent of $\ms{A}$. 
Note that the kernel of  $\ms{A}^0=\mathrm{id}_{\mc{X}_N}$  is given by $(\delta_0-L^{-Nd})\mathbb{1}_{m\times m}$ because this function has average zero and
constant shifts do not change the operator so it generates the same operator as $\delta_0 \mathbb{1}_{m\times m}$.
In order to show $M_k\geq 0$ it is therefore sufficient to 
show $c_{k,0}\geq 0$ because $M_k=-c_{k,0}L^{-Nd}\mathrm{Id}$.
The inequality $W_t(0)\geq 0$ implies that the constant term
of the polynomial $W_t$ is positive.   Hence
\begin{align}
c_{k,0}=\left(\int_{L^{k-1}/(2R)}^{L^k/(2R)} t\cdot W(0)\,\d t\right)>0.
\end{align}

It remains to establish the bounds. Here it is useful to rely on the estimate \eqref{Ahatestimate}
for the spectrum of the Fourier coefficients instead of the bounds on the quadratic form $Q$.
Since $\ms{A}$ is diagonal in Fourier space and this property carries over to
polynomials in $\ms{A}$ the identity 
\begin{align}
 \widehat{\mc{C}}_k(p)=\int_{\frac{L^{k-1}}{2R}}^\frac{L^k}{2R}
t\cdot W_t(\widehat{\mc{A}}(p))\,\d t
\end{align}
holds for $2\leq k\leq N$ and similar identities hold for $k=1$ and $k=N+1$.
Using this equation we can derive strong bounds for the Fourier modes of
$\ms{C}_k$. 
Let us denote the eigenvalues of the symmetric and positive matrix
$\widehat{\mc{A}}(p)$ by $\omega|p|^2\leq\lambda_1\leq \ldots \leq\lambda_m$
where we plugged in the lower bound \eqref{Ahatestimate}.
The key observation is that by estimate \eqref{eq:keyWestimate}
\begin{align}\label{eq:OperatorNormW}
 \lVert W_t(\widehat{\mc{A}}(p))\rVert = \max_{1\leq i\leq m}
|W_t(\lambda_i)|\leq  \frac{C_n}{(\lambda_1 t^2)^n}\leq
\frac{C_n}{(\omega |p|^2 t^2)^n}.
\end{align}
The estimate \eqref{eq:OperatorNormW} implies with $n\geq 2$ and $n=0$
respectively
\begin{align}\label{eq:Ck1}
 \lVert\widehat{\mc{C}}_k(p)\rVert&\leq
\int_{\frac{L^{k-1}}{2R}}^{\infty} \frac{C_n}{(\omega |p|^2t^2)^n}t\, \d
t\leq
\frac{C_n(2R)^{2n-2}}{(2n-1)\omega^n}|p|^{-2}(|p|L^{k-1})^{-(2n-2)}\;\forall \, 2\leq k\leq N+1\\
\label{eq:C k2}
\lVert\widehat{\mc{C}}_k(p)\rVert&\leq
\int_{0}^{\frac{L^k}{2R}} C_0t\, \d
t\leq \frac{C_0L^{2k}}{8R^2}\; \forall 1\leq k\leq N.
\end{align}
Note that the first bound does not hold for $k=1$ and the last bound does not hold
for $k=N+1$. Moreover there is the trivial bound 
\begin{align}\label{eq:Cktriv}
 \lVert\widehat{\mc{C}}_k(p)\rVert\leq \lVert\widehat{\mc{C}}(p)\rVert
 =\lVert\widehat{\mc{A}}(p)^{-1}\rVert\leq
\frac{|p|^{-2}}{\omega}.
\end{align}
The most useful combination of these bounds is
\begin{align}\label{eq:RegimeBound}
 \lVert\widehat{\mc{C}}_k(p)\rVert\leq 
\left\{\begin{alignedat}{2}
  & C_{n'}|p|^{-2}(|p|L^{(k-1)})^{-n'} \quad &&\text{for}\,
|p|\geq L^{-(k-1)},\; n'\geq 2, \\
  & C|p|^{-2}\quad &&\text{for}\, L^{-(k-1)}>|p|\geq L^{-k},\\
  & CL^{2k}\quad &&\text{for}\, L^{-k}>|p|.
  \end{alignedat}\right.
\end{align}
This bound also holds for $k=1$ and $k=N+1$. Indeed we have
$(|p|L^{k-1})^{-n'}\geq (\pi \sqrt{d})^{-n'}$ for $k=1$
so \eqref{eq:Cktriv} implies the first estimate and
$\{p\in\widehat{T}_N:\,|p|<L^{-(N+1)}\}=\emptyset$. 
In particular \eqref{eq:RegimeBound} contains \eqref{fourierbounds} for $\ell=0$.
For $|\alpha|+d>2$ we get using  \eqref{eq:RegimeBound} with $n'>d-2+|\alpha|$
and
 \eqref{l1linftyFourier}   
\begin{align} 
\begin{split}\label{eq:NablaCk}
 \lVert \nabla^\alpha \mc{C}_k(x)\rVert 
 &\leq \frac{C}{L^{Nd}}
\left(\sum_{\substack{p\in\widehat{T}_N\\|p|\geq L^{-(k-1)}}}\hspace{-0.2cm}
|p|^{|\alpha|-2}(|p|L^{(k-1)})^{-n'}+\hspace{-0.4cm} 
\sum_{\substack{p\in\widehat{T}_N\\
 L^{-(k-1)}>|p|\geq L^{-k}}}\hspace{-0.3cm} |p|^{|\alpha|-2}
 +\sum_{\substack{p\in\widehat{T}_N\\
L^{-k}>|p|}}\hspace{-0.2cm} |p|^{|\alpha|}L^{2k}\right) \\
&\leq C\hspace{-0.4cm}\int \limits_{\scriptstyle{L^{-(k-1)}\leq
r}}\hspace{-0.4cm} r^{d-3+|\alpha|}
(rL^{(k-1)})^{-n'}\,\d r+C\hspace{-0.6cm}
\int\limits_{L^{-k}<r<L^{-(k-1)}} \hspace{-0.6cm}r^{d-3+|\alpha|}
\,\d r+C
\int\limits_{r<L^{-k}} \hspace{-0.2cm}L^{2k}r^{|\alpha|+d-1}\,\d r\\
&\leq CL^{-(k-1)(d-2+|\alpha|)}+CL^{-k(d-2+|\alpha|)}\leq
CL^{-(k-1)(d-2+|\alpha|)}
\end{split}
\end{align}
 where we approximated the Riemann sums by their
integrals possibly increasing the constant. This approximation can be justified
using a dyadic decomposition for the sums.
Note that the constant $C$ does not depend on $N$ or $L$.
The condition $d+|\alpha|>2$ was used to bound the second integral. For
$d+|\alpha|=2$ it behaves as $\int \frac{\d r}{r}\approx
\ln(L)$ hence we get an additional logarithm in this case.

Next we consider derivatives with respect to the parameter matrix $A$.
We need the following simple lemma for which we did not find an exact reference
in the
literature. Similar arguments can be found in \cite{MR2814377}.
\begin{lemma}\label{le:deriv}
 Let $A, B\in \mathbb{R}^{m\times m}_{\mathrm{sym}}$ be symmetric matrices and
let $\lambda_1\leq \ldots\leq \lambda_m$ be
the  eigenvalues of
$A$ counted with multiplicity. Let $f$ be a holomorphic function. Then there is
a combinatorial constant
$C_{m,\ell}$ such that
\begin{align}
 \left\Vert \left.\frac{\d^\ell}{\d s^\ell} f(A+sB)\right|_{s=0}\right\Vert\leq
C_{m,\ell}
\sup_{\lambda\in [\lambda_{1},\lambda_{m}]} |f^{(\ell)}(\lambda)| \lVert B\rVert^{\ell}.
\end{align}

\end{lemma}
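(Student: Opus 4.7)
The plan is to proceed via the Daletskii–Krein formula combined with the Hermite–Genocchi integral representation of divided differences. Since $A$ is symmetric, I can write the spectral decomposition $A=\sum_{i=1}^m \lambda_i P_i$ where $P_i$ is the orthogonal projection onto the span of a chosen orthonormal basis of eigenvectors (repeated eigenvalues are allowed; the $P_i$ are rank-one). Starting from the Cauchy integral formula
\begin{align*}
f(A+sB)=\frac{1}{2\pi i}\oint_\Gamma f(z)(z-A-sB)^{-1}\,\d z
\end{align*}
for a contour $\Gamma$ enclosing $[\lambda_1,\lambda_m]$ on which $f$ is holomorphic, I would expand the resolvent as a Neumann series in $s$ and identify the order-$\ell$ coefficient, obtaining
\begin{align*}
\frac{1}{\ell!}\left.\frac{\d^\ell}{\d s^\ell} f(A+sB)\right|_{s=0}
=\frac{1}{2\pi i}\oint_\Gamma f(z)(z-A)^{-1}\bigl[B(z-A)^{-1}\bigr]^\ell\,\d z.
\end{align*}

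Next, I would insert the partial fraction decomposition $(z-A)^{-1}=\sum_i (z-\lambda_i)^{-1}P_i$ and interchange the sum with the contour integral. Using the standard identity
\begin{align*}
\frac{1}{2\pi i}\oint_\Gamma \frac{f(z)\,\d z}{\prod_{j=0}^\ell (z-\lambda_{i_j})}
= f[\lambda_{i_0},\ldots,\lambda_{i_\ell}],
\end{align*}
where $f[\cdot,\ldots,\cdot]$ denotes the divided difference (well-defined in the confluent case by continuity), this yields the Daletskii–Krein representation
\begin{align*}
\frac{1}{\ell!}\left.\frac{\d^\ell}{\d s^\ell} f(A+sB)\right|_{s=0}
=\sum_{i_0,\ldots,i_\ell=1}^m f[\lambda_{i_0},\ldots,\lambda_{i_\ell}]\,P_{i_0}BP_{i_1}B\cdots BP_{i_\ell}.
\end{align*}

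To pass from $f$ to $f^{(\ell)}$, I would invoke the Hermite–Genocchi formula
\begin{align*}
f[\lambda_{i_0},\ldots,\lambda_{i_\ell}]
=\int_{\Delta_\ell} f^{(\ell)}\!\left(\sum_{j=0}^\ell t_j \lambda_{i_j}\right)\d t_1\cdots\d t_\ell,
\end{align*}
where $\Delta_\ell=\{(t_1,\ldots,t_\ell)\colon t_j\geq 0,\;\sum t_j\leq 1\}$ has volume $1/\ell!$ and $t_0=1-\sum_{j\geq 1}t_j$. Since the argument of $f^{(\ell)}$ is a convex combination of the $\lambda_{i_j}\in [\lambda_1,\lambda_m]$, this gives the pointwise bound $|f[\lambda_{i_0},\ldots,\lambda_{i_\ell}]|\leq \frac{1}{\ell!}\sup_{[\lambda_1,\lambda_m]}|f^{(\ell)}|$.

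Finally, each factor $\|P_{i_0}BP_{i_1}\cdots BP_{i_\ell}\|\leq \|B\|^\ell$ since the $P_i$ are orthogonal projections, and there are at most $m^{\ell+1}$ terms in the sum, giving the claim with the combinatorial constant $C_{m,\ell}=m^{\ell+1}$. The only subtle point — and the step I expect to require the most care — is the handling of repeated eigenvalues, i.e.\ the confluent divided differences; this is resolved by noting that both sides of the Daletskii–Krein identity are continuous in $\lambda_1,\ldots,\lambda_m$, so the general case follows from the distinct-eigenvalue case by approximation, and the Hermite–Genocchi bound depends only on $\sup|f^{(\ell)}|$ over the spectral interval and thus passes to the limit without change.
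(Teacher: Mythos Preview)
Your proof is correct and follows essentially the same route as the paper's: Cauchy integral formula, Neumann expansion of the resolvent, spectral decomposition into rank-one projections, identification of the resulting contour integrals as divided differences, and the final count of $m^{\ell+1}$ terms giving $C_{m,\ell}=m^{\ell+1}$. The only cosmetic difference is that where you invoke the Hermite--Genocchi integral representation to bound $|f[\lambda_{i_0},\ldots,\lambda_{i_\ell}]|\leq \tfrac{1}{\ell!}\sup_{[\lambda_1,\lambda_m]}|f^{(\ell)}|$, the paper cites the mean value theorem for divided differences directly---these are equivalent (the latter follows from the former by the mean value theorem for integrals).
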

\begin{proof}
The proof is based on a representation of the matrix derivative using the
Cauchy formula that appears e.g. in \cite{MR1335452}.
Note that the eigenvalues of $A+sB$ are continuous functions of $s\in
\mathbb{R}$ for $A$
and $B$ symmetric \cite{MR1335452}. 
Let $C$ be a curve around all the eigenvalues of $A+sB$ for $s\in(-\epsilon,
\epsilon)$ with winding number 1. 
By the Cauchy formula
\begin{align}
 f(A+sB)=\frac{1}{2\pi i}\int_C f(z) (z\mathrm{Id}-(A+sB))^{-1}
\end{align}
Differentiating $\ell$ times with respect to $s$ or using the Neumann series for the matrix inverse  gives
\begin{align}\label{eq:matrixderivative}
\left. \frac{\d^\ell}{\d s^\ell}f(A+sB)\right|_{s=0}=\frac{\ell!}{2\pi i}\int_C
f(z)
(z\mathrm{Id}-A)^{-1}B(z\mathrm{Id}-A)^{-1}B\ldots B(z\mathrm{Id}-A)^{-1}.
\end{align}
Now we write $A=\sum_{i=1}^m \lambda_iP_i$ as a sum of orthogonal projections
such that $\sum_{i=1}^m P_i=\mathrm{Id}$.
Then we find for $z\notin \mathrm{spec}(A)$ that
$(z\mathrm{Id}-A)^{-1}=\sum_{i=1}^n (z-\lambda_i)^{-1}P_i$.
Plugging this in \eqref{eq:matrixderivative} we bound
\begin{align}\begin{split}
\left\Vert\left. \frac{\d^\ell}{\d
s^\ell}f(A+sB)\right|_{s=0}\right\Vert&=\left\Vert\frac{\ell!}{2\pi
i}\sum_{i_1, \ldots, i_{\ell+1}=1}^m\int_C
\frac{f(z)}{(z-\lambda_{i_1})\ldots(z-\lambda_{i_n})}P_{i_1}BP_{i_2}\ldots
BP_{i_{\ell+1}}\right\Vert\\
&\leq\sum_{i_1, \ldots, i_{\ell+1}=1}^m \left|\frac{\ell!}{2\pi
i}\int_C
\frac{f(z)}{(z-\lambda_{i_1})\ldots(z-\lambda_{i_{\ell+1}})}\right| \lVert
B\rVert^\ell
\end{split}
\end{align}
The term in absolute values is the sum of divided differences \cite{MR2221566} and by
the mean
value theorem for finite difference (Proposition 43 in \cite{MR2221566})
there is a $\xi\in (\min_{1\leq j\leq n+1}\lambda_{i_j},
\max_{1\leq j\leq n+1}\lambda_{i_j})$ such that
\begin{align}
 \frac{\ell!}{2\pi
i}\int_C
\frac{f(z)}{(z-\lambda_{i_1})\ldots(z-\lambda_{i_\ell})}
=f^{(\ell)}(\xi).
\end{align}
This implies the claim
\begin{align}
 \left\Vert \left.\frac{\d^\ell}{\d s^\ell} f(A+sB)\right|_{s=0}\right\Vert\leq
m^{(\ell+1)}
\sup_{\lambda\in [\lambda_1,\lambda_m]} |f^{(\ell)}(\lambda)|\lVert
B\rVert^\ell.
\end{align}
\end{proof}
We apply this lemma to the Fourier modes of the operators $\ms{A}_{A+sA_1}$
where $A_1\in \mc{L}(\mc{G})$ is a linear and symmetric but not necessarily
positive
operator.
Note that $W_t$ can be extended to a holomorphic function in a neighbourhood
of $(0,B)$ because this holds for the $\arccos$ function and $\p$ is holomorphic
since $\widehat{\p}$ has compact support.
Then we find using linearity of the Fourier transform, Lemma \ref{le:deriv}, the bounds for the
spectrum of the Fourier modes \eqref{Ahatestimate}, and the estimate
\eqref{eq:keyWestimate}
\begin{align}\begin{split}
 \left\Vert \frac{\d^\ell}{\d
s^\ell}W_t(\widehat{\mc{A}}_{A+sA_1}(p))\right\Vert
 &=
 \left\Vert \frac{\d^\ell}{\d
s^\ell}W_t(\widehat{\mc{A}}_{A}(p)+s\widehat{\mc{A}}_{A_1}(p))\right\Vert\\
&\leq 
C_{m,\ell}\lVert
\widehat{\mc{A}}_{A_1}(p)\rVert^\ell \sup_{\lambda\in
\mathrm{Conv}(\mathrm{spec}
\;\widehat{\mc{A}}_{A}(p))} |W^{(\ell)}(\lambda)| \\
&\leq 
C_{m,\ell}\Omega^\ell |p|^{2\ell}\lVert
A_1\rVert^\ell \sup_{\lambda\in \mathrm{Conv}(\mathrm{spec}\;
\widehat{\mc{A}}_{A}(p))} |W^{(\ell)}(\lambda)|\\
&\leq 
C_{m,\ell} \Omega^\ell\lVert
A_1\rVert^\ell |p|^{2\ell} \sup_{\lambda\in \mathrm{Conv}(\mathrm{spec}\;
\widehat{\mc{A}}_{A}(p))}\frac{C_{n,\ell}}{\lambda^\ell (1+\lambda t^2)^n}
\\
&\leq 
C_{m,n,\ell}\left(\frac{\Omega}{\omega}\right)^{\ell}\lVert A_1\rVert^\ell  \min(1, (\omega
|p|^2t^2)^{-n}).
\end{split}
\end{align}
This is up to a constant exactly the same bound we used before to obtain
\eqref{eq:Ck1} (for $k\geq 2$) and \eqref{eq:C k2} (for $k\leq N$).
Let us check that also the bound $\lVert \widehat{\mc{C}}_k(p)\rVert<C|p|^{-2}$
generalises to $\ell\geq 1$.
We bound for $L^{-k}<|p|<L^{-(k-1)}$ and $\lVert \dot{A}\rVert \leq 1$
\begin{align}
 \lVert D^\ell_A \widehat{\mc{C}}_k(p)(\dot{A},\ldots,\dot{A})\rVert\leq 
 \int_{\frac{L^{k-1}}{2R}}^{|p|^{-1}} C_{m,0,\ell}\cdot t\,\d t+
 \int_{|p|^{-1}}^{\frac{L^{k}}{2R}} \frac{C_{m,2,\ell}}{ |p|^{4}t^{4}}t\,\d t\leq
C|p|^{-2}
\end{align}
Hence we find a bound similar to \eqref{eq:RegimeBound} 
\begin{align}
 \label{eq:RegimeBound2}
 \lVert D^\ell_A\widehat{\mc{C}}_k(p)(\dot{A},\ldots, \dot{A})\rVert\leq 
\left\{\begin{alignedat}{2}
  & C_{n'}|p|^{-2}(|p|L^{(k-1)})^{-n'} \quad &&\text{for}\,
|p|\geq L^{-(k-1)},\; n'\geq 2 \\
  & C|p|^{-2}\quad &&\text{for}\, L^{-(k-1)}>|p|\geq L^{-k}\\
  & CL^{2k}\quad &&\text{for}\, L^{-k}>|p|
  \end{alignedat}\right.
\end{align}
This bound completes the proof of the upper bound in Fourier space \eqref{fourierbounds}.
As in \eqref{eq:NablaCk} this bound also implies \eqref{discretebounds} for
$\ell\geq 1$.

Finally we consider the lower bound. 
For $|p|\leq t^{-1}$ we find $\lVert \widehat{\mc{A}}(p)\rVert \leq \Omega
|p|^2\leq Bt^{-2}$. Hence \eqref{eq:Wlowerbound} implies
 $W_t(\widehat{\mc{A}}(p))\geq \epsilon
\mathbb{1}_{m\times m}$. 
Using this bound we find for $|p|\leq L^{-k}$ and $k\geq 2$
\begin{align}
 \widehat{\mc{C}}_k(p)=\int_{\frac{L^{k-1}}{2R}}^{\frac{L^k}{2R}}
tW_t(\widehat{\mc{A}}(p))\,\d t\geq \frac{\epsilon L^{2k}}{16R^2}.
\end{align}
Using the positivity of $W_t$ we get for $k=1$
\begin{align}
 \widehat{\mc{C}}_k(p)&=\int_{0}^{\frac{L}{2R}}
tW_t(\widehat{\mc{A}}(p))\,\d t\\
&\geq \epsilon \int_{0}^{\min(\frac{L}{2R},|p|^{-1})}
t\,\d t \geq \frac{\epsilon}{2}\min\left(|p|^{-2},\frac{L^2}{4R^2}\right).
\end{align}
This completes the proof of Theorem \ref{AKMFRD}.
\end{proof}

\end{appendix}

	\section*{Acknowledgements}
	This work was supported by the Bonn International Graduate School of
Mathematics and the CRC 1060
	{\it The mathematics of emergent effects}.
	 I would like to thank my
advisor S. M\"{u}ller for his advice and several helpful discussions and  David Brydges for several helpful comments.
	\bibliographystyle{amsplain}
\addcontentsline{toc}{section}{References}
\bibliography{/home/buchholz/Dokumente/PHDLiteratur/Literatur/New_PHD.bib}
\textsc{Institute for Applied Mathematics, University of Bonn,
Endenicher Allee 60, 53115 Bonn}\newline
E-mail: buchholz@iam.uni-bonn.de
\end{document}